\newtheorem{theorem}{Theorem}
\newtheorem{proposition}[theorem]{Proposition}
\newtheorem{lemma}[theorem]{Lemma}
\newtheorem{corollary}[theorem]{Corollary}
\newcommand{\ba}[1]{\begin{array}{#1}}
\newcommand{\ea}{\end{array}}
\newcommand{\ee}{\end{equation}}
\newcommand{\bea}{\begin{eqnarray}}
\newcommand{\eea}{\end{eqnarray}}
\newcommand{\beann}{\begin{eqnarray*}}
\newcommand{\eeann}{\end{eqnarray*}}
\newtheorem{definition}[theorem]{Definition}
\newtheorem{remark}[theorem]{Remark}
\newcommand{\R}{\mathbb{R}}
\newcommand{\nc}{\newcommand}
\nc{\G}{\Gamma}
\nc{\g}{\gamma}
\nc{\al}{\alpha}
\nc{\be}{\beta}
\nc{\del}{\delta}
\nc{\io}{\iota}
\nc{\ka}{\kappa}
\nc{\lam}{\lambda}
\nc{\Lam}{\Lambda}
\nc{\w}{\omega}
\nc{\om}{\omega}
\nc{\Om}{\Omega}
\nc{\Oms}{\Omega^*}
\nc{\s}{\sigma}
\nc{\Si}{\Sigma}
\nc{\ta}{\tau}
\nc{\h}{\theta}
\nc{\z}{\zeta}
\nc{\ut}{t} 
\nc{\ran}{\rangle}
\nc{\lan}{\langle}
\newcommand{\im}{\operatorname{Im}}
\newcommand{\diag}{\operatorname{diag}}
\nc{\bfone}{{\bf 1}}
\newcommand{\n}{\nabla}
\newcommand{\p}{\partial}
\title{Yang-Mills-Higgs Equations on Spherical Orbifolds}
\author{Nicholas M. Ercolani\\
\textit{Department of Mathematics, University of Arizona}\\
\textit{Email: ercolani@math.arizona.edu}}
\date{April 2024}
\begin{document}

\maketitle

ABSTRACT: In this paper we study ({\it static}) solutions of the rank 2 Yang-Mills-Higgs equations on the Riemann sphere, with concical singularities, that  bifurcate from constant curvature connections. We focus attention on the case where there are exactly four such singularities. This study brings together ideas from the gauge theory of constant curvature connections on vector bundles over singular Riemann surfaces with the Riemann-Hilbert analysis  of  classical Fuchsian ODEs. 

\section{Introduction} \label{sec:intro}

Superconductivity is a remarkable instance of situations in which the peculiarities of quantum mechanics manifest themselves on the macroscopic level.  The Ginzburg-Landau equations (GLE) are the Euler-Lagrange equations of a variational model for superconductivity that captures well the qualitative features of resistance drop  
below a critical temperature and magnetic flux expulsion (Meissner effect) \cite{bib:feyn}. The GLE free energy in a 2D model for equilibrium states of superconductors is given by

\begin{eqnarray} \label{GLEen}
\mathbb{E}(\psi, a) &=& \int \left( |\nabla_a \psi |^2 + |\mbox{curl}\,\, a|^2 + \frac{\kappa^2}{2} (|\psi|^2 - 1)^2\right) dxdy \\ \nonumber
(\psi, a)&:& \mathbb{R}^2 \to \mathbb{C} \times \mathbb{R}, \nabla_a = \nabla - i a, - \Delta_a = \nabla_a^* \nabla_a
\end{eqnarray}
where  $\psi$  is a wavefunction  (complex scalar field) for which $|\psi|^2$ represents the number density of Bardeen-Cooper-Schrieffer electron pairs and $a$ provides the vector potential for the magnetic field.  This energy, and corresponding variational equations, are invariant under gauge transformations $g: \mathbb{R}^2 \to U(1),$
\begin{eqnarray*} 
 (\psi, a) \mapsto (g \psi, gag^{-1} + g^{-1} d g).
\end{eqnarray*}
Motivated by the prediction and  discovery of periodic structures of vortex flux tubes in type II superconductors (Abrikosov lattices) \cite{bib:abri, bib:cjrf, bib:et}, it was natural to try to determine  existence of solutions of the GLE equations that are periodic with respect to a  lattice structure \cite{bib:odeh, bib:bgt, bib:tak} and their comparative energetic stability \cite{bib:ts}. Phrased differently, in these works one is studying GLE on Riemann surfaces of genus 1 (elliptic curves). Given all this it was also natural to extend these lines of investigation to higher genus (non-abelian) Riemann surfaces \cite{bib:cers}, 
including non-compact surfaces \cite{bib:esz}.

Another non-abelian extension is to consider these types of models for non-abelian gauge theories, referred to as Yang-Mills-Higgs (YMH) models. This type of generalization presents a number of challenges. The purpose 
of this note is to consider YMH  equations on the Riemann sphere with conical singularities (also referred to as a {\it spherical orbifold}). Introducing singularities might inititally appear to be an unwonted complication; however,  it greatly simplifies the underlying geometry and allows one to more properly focus on the non-abelian aspects of the gauge theory. Indeed, significant literature in the setting of algebraic geometry, has focused upon meromorphic connections on the ``punctured sphere" \cite{bib:al, bib:biq1, bib:bau, bib:lor}.  These investigations build on classical work going back to Fuchs. Fuchsian  differential equations are usually taken to be meromorphic ODEs of the form  
\begin{eqnarray} \label{ODEFuchs}
    \frac{d\Psi}{dz} =  
    \sum_{k  = 1}^n \frac{A_k}{z - z_k}  \Psi(z)
\end{eqnarray}
on the complex plane for fixed distinct points, $z_1, \dots, z_n$, with $\Psi \in \mathbb{C}^m$ and  with $A_k$ being fixed, constant $m \times m$ matrices. This naturally extends to an ODE with regular singular points on the Riemann sphere ( including a singularity at infinity). Global analytic continuation of the fundamental solution of \eqref{ODEFuchs} on the punctured sphere yields a representation of the homotopy group of this surface referred to as the {\it monodromy representation}.  In his $21^{st}$ problem, Hilbert asked if one could associate to any representation of the homotopy group of the punctured sphere  a Fuchsian ODE having that representation as its monodromy represenetation. A related classical question is to determine deformations of the points $\{ z_1, \dots z_n\}$ that leave the monodromy unchanged. This is referred to as the {\it isomonodromy problem} which, in the case that $n=4$, has direct connections to  the Painlev\'e VI equation. We refer the reader to \cite{bib:both} for an informative and up-to-date history of these questions and their status. We also mention that a significant motivation for the modern work concerning connections on punctured spheres came from Hitchen's pioneering application of isomonodromy deformations of meromorphic connections on punctured spheres to the construction of anti-self-dual Einstein metrics on four dimensional space-times  \cite{bib:hitch}.

As will be seen in this article, Fuchsian ODEs provide examples of the extension of the scalar covariant derivative, $\nabla_a$ in \eqref{GLEen}, to the vector setting of YMH. In fact one of the main results of this note (in Section \ref{sec:Holo}) is that any YMH covariant derivative on the spherical orbifold may be gauge reduced to one of Fuchsian type. 

In \cite{bib:dikz}, Deift, Its, Kapaev and Zhou derived, using Riemann-Hilbert methods, explicit solutions of \eqref{ODEFuchs} on the punctured sphere, with further extensions developed in \cite{bib:kor, bib:eg}. As our second result we will see that these explicit constructions (expanded on in Section \ref{sec:Split}) provide a key step in the bifurcation analysis showing the  emergence of lattice solutions for YMH (Corollary \ref{cor:main}). While this provides the mechanism for the explicit construuctions we need, it  requires us to extend the formulation of our gauge theories to handle the inclusion of singularities into the underlying analysis. Fortunately the technology now exists to be able to handle this. This entails our final set of results which are developed in Sections \ref{sec:HOrb} - \ref{OrbConn}.
\bigskip

This study should be viewed as a "proof of concept" foray that hopefully leads to a more general treatment of non-abelian lattice solutions to YMH equations. 
In the conclusions of this note, some potential extensions will be discussed. 
\bigskip

\subsection{The YMH Equations}
\label{YMH}

To define the YMH equations on Riemann surfaces we need to briefly recall some standard constructions from the theory of connections and vector bundles. 
For more detailed background we refer the reader to \cite{bib:wells, bib:koby}. 

Let $\mathcal{R}$ be a Riemann surface with fixed metric $h =  e^\phi dz \otimes d\bar{z}$, and
let $E$ be a vector bundle over $\mathcal{R}$ with fibers isomorphic to a finite-dimensional complex vector space $V$ and a Lie (matrix) group acting on $V$.  
In this paper we will primarily restrict attention to the case where $V$ is two dimensional ($E$ has rank 2) or one dimensional (rank 1, more commonly referred to as a {\it line bundle}). The Lie group will, respectively, be taken to be $U(2)$ or $U(1)$. 

A connection $A$ is a (respectively 2 or 1 dimensional) matrix valued one-form satisfying 
$A^* = -A$ (the Lie algebra of the unitary groups). The connection is associated to a covariant derivative, $\nabla_A$, that maps sections of $E$ (locally vector-valued functions on $\mathcal{R}$) to 1-forms. 
Explicitly, given a section $\Psi$,
\begin{eqnarray*}
   \nabla_A \Psi = \nabla \Psi + A\Psi 
\end{eqnarray*}
 where $\nabla$ is the gradient operator, determined by the metric $h$, and $A$ acts on $\Psi$ by matrix multiplication. Given a curve $\gamma$ on $\mathcal{R}$,  $\nabla_A$ acts on the tangent vector field to the curve by directional derivative with respect to $\nabla$ and by  evaluating the one forms in $A$ on the vector field. This then defines a notion of parallel translation along the curve by solving $\nabla_A \Psi = 0$ along $\gamma$. 
 
 We also assume that $E$  is equipped with a smooth, fiberwise hermitian inner product, $\langle \cdot, \cdot \rangle$. This enables one to define an inner product on sections of the bundle $E$ which extends naturally to sections of the bundle of differential forms with values in $E$: 
 \begin{eqnarray} \label{innerprod}
  ||\eta||^2 = \int_\mathcal{R} \langle \eta \wedge  * \eta  \rangle    
 \end{eqnarray}
where $*$ is the Hodge star operator associated to the metric $h$ on $\mathcal{R}$
\cite{bib:wells}.  This will be defined explicitly  in the cases we need, but  in all cases $\langle \eta \wedge  * \eta  \rangle$ is a real-valued, positive function  multiple of the area form on $\mathcal{R}$. 

From a connection and inner product on $E$, one may define other operators, such as the covariant Laplacian, 
$ -\Delta_{A_{}}: = \n_{A_{}}^*  \n_{A_{}}$, where $\n_{A_{}}^*$ is the adjoint operator with respect to the metrics on $E$ and $\mathcal{R}$. Also, for $B$ a  matrix valued one form, one has 
$d_A B:=d B + \frac12 [A, B]$ \cite{bib:wells}. 
This extends to higher order matrix valued forms in the same way as the ordinary exterior derivative, $d$, does. $d_A^*$ is its adjoint, explicitly given by $d_A^* = -1^k * d_A *$ on $k$-forms. 

The (static) YMH equations for a section $\Psi$ of $E$ and a connection, $A$, on $E$  are 
  \begin{subequations}\label{YMH-eqs}
    \begin{align}  \label{YMH-eqs-psi} 
     & -\Delta_{A_{}}   \Psi  =  \frac{\kappa^2}{2} (1 - |\Psi|^2)\Psi,
    \end{align}   \begin{align}
    \label{YMH-eqs-A}
      &     d_{A}^* d_{A} A = J(\Psi, A), 
   \end{align}
\end{subequations}
where $J(\Psi, A)$ is the YMH current 
given by  $J(\Psi, A):=2  \im (\bar\Psi \otimes  d_{A}\Psi)$, where $(\xi\otimes\eta)_{ij}=\xi_i\eta_j$.
Equations \eqref{YMH-eqs}  are the Euler-Lagrange equation of the YMH energy functional \begin{align}\label{YMHen}
	\mathbb{E} (\Psi, A, h) &= \|\nabla_A \Psi\|^2  + \|d_A A\|^2 + \frac{\kappa^2}{2}\|(|\Psi|^2-1)\|^2. 
\end{align} 

\subsection{Outline}
In the previous subsection we have defined the YMH equations on a smooth Riemann  surface. In the remainder of this paper we want to define how these equations may be extended to compact surfaces with conical singularities, also known as orbifolds. We will focus on the Riemann sphere with such singularities,
spherical orbifolds, and even more particularly the case with just four distinct singularities. In  section \ref{TopCurv} we review the topological and geometric invariants that determine the character of solutions to the YMH equations, again in the case of a smooth Riemann surface.  In section \ref{sec:HOrb} we develop the necessary topological, geometric and analytical background to define the YMH equations on a spherical orbifold. In  particular this section explains the important role played by a smooth elliptic cover of the orbifold. Section \ref{OrbConn} presents a homotopy theoretic alternative description of orbifold connections and goes into the details of this for our case. We also use this to describe the moduli spaces for such solutions and the associated notions of degree and Chern-Weil formula in this singular case. The section concludes with a precise definition of YMH orbifold-solutions and their analytical properties in the case of connections with finite total curvature. Section \ref{sec:Holo} describes the holomorphization of orbifold connections associated to the solution of a d-bar problem. This is where one sees the reduction of general covariant derivatives to Fuchsian ODEs. From that vantage point the analysis of parallel translation with respect to these derivatives can be attacked by known methods of Riemann-Hilbert analysis.
The background for that analysis is developed in Section  \ref{sec:Split}.  In Section \ref{LinSpec} we discuss the linearization of our equations and their associated spectral theory. Finally in  Section \ref{Existence} we outline the proof for the existence and uniqueness of a bifurcating branch of solutions with non-trivial Higgs field. 
\begin{remark}
We note that there are interesting studies of YMH equations on orbifolds in terms of what are known as Higgs bundles \cite{bib:ns}. These are solutions for a particular value of  $\kappa$ where these equations become self-dual. What we do here is a bifurcation analysis with respect to varying values of $\kappa$ and the  solutions we focus on are very far from being self-dual.
\end{remark}

\section{Topology and Curvature} \label{TopCurv}

The description of solutions to \eqref{YMH-eqs} is naturally phrased in terms of certain invariants of $\mathcal{R}$ and $E$.  The first of these is the degree of $E$, denoted $\deg E$.
This is an integer-valued topological invariant. In the case of a line bundle $L$ on $\mathcal{R}$ this can be defined via the short exact sheaf sequence
\begin{eqnarray*}
    0 \to \mathbb{Z} \to \mathcal{O} \to \mathcal{O}^* \to 0
\end{eqnarray*}
where $\mathcal{O}$ is the sheaf of holomorphic functions on $\mathcal{R}$ and $\mathcal{O}^*$ is the sheaf of non-vanishing holomorphic functions on $\mathcal{R}$ with the map between them given by $\exp 2\pi i (\cdot)$.  This induces, in the associated long exact cohomology sequence, a connecting homopmorphism,
\begin{eqnarray*}
    H^1(\mathcal{R}, \mathcal{O}^*) \to H^2(\mathcal{R}, \mathbb{Z}) = \mathbb{Z}.
\end{eqnarray*}
The elements of $H^1(\mathcal{R},\mathcal{O}^*)$ are $\check{C}ech$ equivalence classes \cite{bib:wells} of transition functions on local trivializations of line bundles and so are in 1:1 correspondence with topologically inequivalent line bundles on $\mathcal{R}$. The above homomorphism defines the degree of $L$, $\deg L$. More generally, for bundles, $E$, of any rank $m$,  this degree is given by $\deg E := \deg \Lambda^m E$ where $\Lambda^m E$ is the determinant line bundle of $E$. (In all cases this degree is also referred to as the first Chern
class, $c_1(E)$, of the bundle.) 

The second important fact here is that YMH is a gauge theory. This means that for $g(z) \in C^1(X, U(m))$, if $(\Psi, A)$ solve \eqref{YMH-eqs} on $E$, then 
\begin{eqnarray*}
    T_g(\Psi,A) := \left(g\Psi, g^{-1}Ag + g^{-1} dg   \right)
\end{eqnarray*}
is another section-connection pair that solves \eqref{YMH-eqs} on $E$.

Finally we introduce the {\it curvature of $A$}, defined by
$F_{A}:= d_A A$. This is a matrix-valued two-form that transforms by simple conjugation under the gauge action.
  A connection $A$ on $E$ is said to be a {\it constant curvature connection} if its  curvature is of the form 
 \begin{align} \label{const-curv-cond} F_{A_{}}= -i b \mathbb{I} \otimes \om,   
 \end{align} 
 for some $b\in \R$, where $\om$ is the area form on $\mathcal{R}$. When $b=0$ the connection is said to be {\it flat}.

 The topological invariant, $\deg E$, may be analytically expressed in terms of the curvature through the 
 Chern-Weil relation
 \begin{eqnarray} \label{CW}
     \deg E = -\frac{1}{2 \pi i} \int_\mathcal{R} \mbox{Tr} F_A.
 \end{eqnarray}
In the constant curvature case this determines the constant $b$,
\begin{eqnarray} \label{magflux}
    b = \frac{2 \pi \deg E}{|\mathcal{R}| m(E)},
\end{eqnarray}
where $|\mathcal{R}|$ denotes the total area of the surface and $m(E)$ is the rank of $E$. 
We note from this relation that, for fixed metric and hence fixed total area, the constant $b$, which corresponds to the total magnetic flux of the YMH  solution, is quantized.

\subsection{Normal Solutions: Yang-Mills Connections}

When $\Psi \equiv 0$ one refers to these solutions of  \eqref{YMH-eqs} as being {\it normal}. They correspond to stationary solutions of the Yang-Mills functional, 
$\mathbb{E}_{YM}(A) = ||d_A A||^2 = ||F_A||^2$.
The condition for stationarity is
\begin{eqnarray*}
    d_A^* d_A A = 0.
\end{eqnarray*}
Noting that 
$d_A^* d_A A = d_A^* F_A = *d_A*F_A$ and combining this with Bianchi's identity \cite{bib:wells}, $d_A F_A = 0$, shows that being 
stationary amounts to a nonlinear analogue of the condition for a two-form to be harmonic (closed and co-closed). A flat connection is clearly stationary; but, more generally, so is any constant curvature connection. This follows from the definition $d_A B := dB + \frac12 [A,B]$ since, in the constant  curvature case, $*F_A = -i b \mathbb{I}$.

\section{Orbifolds} \label{sec:HOrb}
 
Our goal in this paper is to make a step towards extending the GLE results of \cite{bib:cers, bib:esz} to YMH. We will describe this here just for the case of a Riemann sphere with conical singularities. As mentioned  in the Introduction,
this is a class that has been studied fairly recently in the literature related to moduli theory of bundles and Yang-Mills theory. The origins for this stem from  the work of Seshadri and collaborators   \cite{bib:sesh, bib:ms}. 

We refer to \cite{bib:troy} for a detailed background on what a general orbifold is; but,  in this section, we present an explicit example of an orbifold structure on the sphere that will exhibit the essential idea  of this structure as well as suffice for all of the purposes of this paper. However,  first we need to introduce some relevant definitions and notations.

\subsection{Spherical Orbifold Structure Induced by an Elliptic Cover }
\label{inducedElliptic}

We begin with a Riemann  surface of genus 1 (often referred to as an {\it elliptic curve}) that is determined by the equation

\begin{eqnarray} \label{defeqn}
    y^2 = \prod_{i=1}^3 (z - z_i), 
  \end{eqnarray}
for complex variables $(z,y)$ and where $z_1, z_2$ and $z_3$ are fixed complex numbers as described in the Introduction. There is a natural involution on this surface given by 

\begin{eqnarray}  \label{invol}
    \iota:  (z,y) &\to & (z, -y).
\end{eqnarray}
with fixed points at $w_i = (z_i, 0)$.
This surface has a one point compactification  by  adding a point at infinity which is another fixed point of
of $\iota$ and we'll denote it by  $w_4 = (\infty, 0)$. In  this way the compact surface,  which going forward we'll denote by $C$, presents as a branched double cover of $\mathbb{S}$ where
\[
\mathbb{S} = \mathbb{C}  \cup \{ \infty \},
\]
the {\it Riemann  sphere}. The points $w_1, \dots, w_4 \in C$ are referred to as the Weierstrass points and the corresponding $z_1, \dots, z_4 (=\infty) \in \mathbb{S}$ are referred to as branch points.   We let $\Sigma$ denote 
$\mathbb{S} \backslash \{ z_1, \dots, z_4\}$ and 
$\hat{\Sigma} = C \backslash \{w_1, \dots, w_4\}$. 
$\hat{\Sigma}$  is an (unbranched) double cover of $\Sigma$. The branched covering map is the topological quotient of $C$ by the action of the order 2 group, $G$,
generated by \eqref{invol}. Given a point $z \in \mathbb{S}$ let $P^{(i)}, i  = 1,2$ denote the two points in the corresponding equivalence class of the quotient, $(z, \pm y)$:
\begin{eqnarray} \label{quotient1}
    C \to \mathbb{S}  &=& C/G\\ \nonumber
     P^{(i)} \mapsto z &=&  \left\{ P^{(1)}, P^{(2)}\right\}.
\end{eqnarray}
 Analytically we can now define a holomorphic differential on $C$. Such a differential, up to a constant multiple, is uniquely given by
 \begin{eqnarray}
   \phi = dz/y.    
 \end{eqnarray}
We fix a basepoint, $z_0 \in  \mathbb{S}, z_0 \ne  z_i$, and  some choice, $P_0 \in C$ over $z_0$. Then we define the {\it Abel map} 
\begin{eqnarray*}
    \mathcal{A}: C  &\to& \mathbb{C}\\
    P &\mapsto&  \int_{P_0}^P \phi.
\end{eqnarray*}
Let $\{a,b\}$ be a basis of the homology group $H_1(C, \mathbb{Z})$. Then
$\omega_1 = \oint_a \phi, \omega_2 = \oint_b \phi$  generate the period lattice $\Gamma$,  of  $\phi$, in $\mathbb{C}$. These periods may be normalised to the basis
$\{ 1, \tau = \omega_2/ \omega_1\}$.  $\mathcal{A}$ then determines a conformal isomorphism (uniformization)
\begin{eqnarray*}
    C \simeq \mathbb{C}/\Gamma. 
\end{eqnarray*}
For more details on  any of the above background we refer the reader to \cite{bib:fk}.
\medskip

We can now define the orbifold structure on $\mathbb{S}$ that we will use.
Taking  $u$ to be the coordinate on $\mathbb{C}$ we define a Riemannian metric structure on $C$ corresponding to the Euclidean metric on  $\mathbb{C}$, with 
$\Gamma$ realized as lattice translations (isometries) 
\begin{eqnarray*}
(C, d\hat{s}^2) = (\mathbb{C}/\Gamma, |du|^2).
\end{eqnarray*}
Under this conformal isomorphism, the involution $\iota$ on $C$ correponds to 
the involution $u \to -u$. The quotient under this involution in a neighborhood of $u=0$ is metrically a cone of total rotational angle $\pi$. For this reason such  a point is referred to as a {\it coniical singularity} of the orbifold. By periodicity under the translation isometries $\Gamma$, with respect to the basis $\{ 1, \tau \}$, there are isometric singularities at three other points, $u \equiv 1/2,  \tau/2, 1/2 + \tau/2$.  These four {\it half periods} correspond, under the Abel map, to the Weierstrass points , $\{w_1, \dots, w_4\}$ on $C$. 

We further note that the covering map extends to the globally conformal map,
\begin{eqnarray} \label{quotient2}
    \mathbb{C}/\Gamma &\to & \mathbb{S} \\ \nonumber
    u &\mapsto& z = \wp(u),
\end{eqnarray}
where $\wp$ is the Weierstrass $P$-function, with further quotient $G\backslash\mathbb{C}/\Gamma$ corresponding to \eqref{quotient1}. Under  this correspondence the half periods map to the branch points  $\{z_1, \dots, z_4\}$ on $\mathbb{S}$.  In this way $|du|^2$ induces a well defined metric on the punctured sphere $\Sigma$ which  develops conical singularities as one approaches the branch points. This gives $\mathbb{S}$ an orbifold structure. All that will be needed going forward is to use this correspondence to define metrical structures on $\Sigma$ by pulling them back to $C$ and using the Euclidean structure one has there. As a first instance of that we consider the Hodge $*$-operator. For the flat metric on $C$ specified above, one has
\begin{eqnarray} \label{star}
    *du  &=&  -i \,\, du\\ \nonumber
    * d\bar{u}  &=& i \,\, d \bar{u}.
\end{eqnarray}
 Given a covariant derivative $\nabla_A$ or covariant differential $d_A$ on $\Sigma$
 these pull back naturally, with respect to \eqref{quotient2},  to corresponding operators definied on $\hat{\Sigma}$, in which the pullback of $\nabla$ is  the gradient operator with respect to the Euclidean metric $|du|^2$ on $\mathbb{C}$ and $d$ is the standard exterior derivative on $\mathbb{C}$. On $\mathbb{C}/\Gamma$, the $*$ operator is defined as specified in \eqref{star}. Then adjoints with respect to the inner product \eqref{innerprod} are determined as before to be
 \begin{eqnarray} \label{starnab}
     \nabla_A^* &=& -*\nabla_A * \\ \label{stard}
     d_A^* &=& -  * d_A * .
 \end{eqnarray}
When there is no chance of confusion we will speak of all these pulled back operators as being {\it on} $\Sigma$.

\subsection{Bundles and Connections on the Spherical Orbifold}

We want to work in a setting that extends the model of Fuchsian differential equations mentioned in the Introduction. To that end we will start out taking $E$ to be the trivial rank 2 bundle on $\Si $; i.e. $E \simeq \,  \Si \times \mathbb{C}^2$. (This coincides with the trivial bundle over $\mathbb{S}$ restricted to $\Si$.) Since the bundle $E$ is defined locally on $\Si$, it pulls back naturally to a bundle over $\hat{\Si}$ with the same local trivializations on each sheet of the double cover. We denote the pullback of this bundle to $\hat{\Si}$ by $\hat{E}$. This extends naturally to a smooth bundle on $C$ by identifying the fibers of the bundle on the two sheets of $\Si$ as they limit into a Weierstrass point of $C$. We will continue to denote this extended rank 2 bundle by $\hat{E}$. This is also a trivial bundle: 
$\hat{E} \simeq \, C \times \mathbb{C}^2$.

 As described at the end of section \ref{inducedElliptic}, a connection $A$ on $E$ over ${\Si}$ is a matrix valued 1 form and so it naturally pulls back to a connection $\hat{A}$ on $\hat{E}$ over $\hat{\Si}$, with corresponding covariant derivative $\nabla_{\hat{A}}$.  
We will see in section \ref{sec:Holo} that both $A$ and
$\hat{A}$ are respectively gauge equivalent to connections that extend to be well-defined meromorphic connections with regular singular points; i.e., they are equivalent to Fuchsian differential equations.  

\subsection{YMH equations on the Orbifold}

Extending the covariant differential defined at the end of section  \ref{inducedElliptic}, one has the following sequence of operators on form-valued sections \cite{bib:wells}.

\begin{eqnarray} \label{precomplex}
\mathcal{E} \xrightarrow{d_{\hat{A}}} &\Omega^1(\mathcal{E}) & \xrightarrow{d_{\hat{A}}} \Omega^2(\mathcal{E}),
\end{eqnarray}
where $\mathcal{E}$ is the sheaf of  smooth sections of $\hat{E}$ on $\hat{\Si}$ and $\Omega^k(\mathcal{E})$ is the sheaf of smooth $k$-form valued sections. 
For this to be well-defined we need to consider the behavior of $\hat{A}$ in a deleted neighborhood of the points $w_i$ on $C$. For that we make use of the following result.
\begin{theorem} \cite{bib:biq1} \label{biquard}
   Let $A$ be a unitary connection on a holomorphic vector bundle $E$ over $\hat{\Sigma}$ with $L^2$-curvature; then $A$ admits a local holonomy around $w_j$ with eigenvalues $e^{2 \pi i \hat{\lambda}_j^\pm}$ and there exists a gauge in a deleted neighborhood of $w_j$ for which \begin{eqnarray*}
       d_A = d + i \left( \begin{array}{cc}
           \hat{\lambda}_j^+ & 0 \\
           0 & \hat{\lambda}_j^-
       \end{array}\right) d\theta +a
   \end{eqnarray*}  
   with $|| a/r||_{L^2} + || \nabla a||_{L^2} < \infty$ where
   $(r, \theta)$ are local polar coordinates around $w_j$. 
 \end{theorem}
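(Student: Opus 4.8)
The statement is local near each $w_j$, so I would first reduce to analysis on a punctured coordinate disk $\D^* = \{0 < |z| < 1\}$ centered at $w_j$, with $z = r e^{i\theta}$. Working in a smooth unitary frame, the connection splits as $A = A^{1,0} + A^{0,1}$ with $(A^{0,1})^* = -A^{1,0}$, and $\bar\partial_A = \bar\partial + A^{0,1}$ furnishes the holomorphic structure on $E$. The only quantitative input is the finiteness $\|F_A\|_{L^2(\D^*)} < \infty$; since $\D^*$ has finite area, Cauchy--Schwarz upgrades this to $F_A \in L^1$, and this integrability is the fact that drives everything below.

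The plan is then to extract the model singular term from the holonomy. For each radius $r$ let $h_r$ denote the parallel-transport holonomy of $A$ around the circle $|z| = r$, computed from a fixed basepoint on a radial ray. By (non-abelian) Stokes the discrepancy between $h_{r_1}$ and $h_{r_2}$ is governed by the curvature flux $\int_{r_1 \le |z| \le r_2} F_A$ through the intervening annulus; as the annulus shrinks this flux tends to $0$ because $F_A \in L^1$, so $\{h_r\}$ is Cauchy and converges to a limiting unitary holonomy $h_0$ as $r \to 0$. Diagonalizing $h_0$ by a constant unitary gauge, its eigenvalues may be written $e^{2\pi i \hat\lambda_j^\pm}$ with the $\hat\lambda_j^\pm$ fixed as real representatives mod $1$, and the model connection $d + i\,\mathrm{diag}(\hat\lambda_j^+, \hat\lambda_j^-)\, d\theta$ is manufactured precisely to reproduce this holonomy, since $\oint d\theta = 2\pi$ gives it monodromy $\exp(2\pi i\,\mathrm{diag}(\hat\lambda_j^+,\hat\lambda_j^-))$.

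The heart of the argument, and the step I expect to be the \emph{main obstacle}, is constructing the gauge that brings $A$ into the asserted normal form together with the quantitative remainder bound. After subtracting the model term I would write $A = d + i\,\mathrm{diag}(\hat\lambda_j^+,\hat\lambda_j^-)\,d\theta + a$ and seek a further gauge transformation killing the radial and off-diagonal parts of $a$ to leading order. Concretely this is a Coulomb-type gauge-fixing problem solved on the shrinking annuli, where one feeds the $L^2$ curvature bound into an elliptic estimate and bootstraps in weighted Sobolev spaces adapted to the cylinder coordinate $t = -\log r$, under which $\D^*$ becomes a half-cylinder $[0,\infty) \times S^1$. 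The weight is chosen exactly so that the estimate closes into $\|a/r\|_{L^2} + \|\nabla a\|_{L^2} < \infty$; the difficulty is that in the cylinder picture the model elliptic operator is Fredholm only on weighted spaces whose weight avoids the indicial roots determined by $\hat\lambda_j^\pm$ and the integer Fourier modes in $\theta$, and the target bound $\|a/r\|_{L^2}$ sits at a borderline weight where decay is marginal. This weighted elliptic analysis on the punctured disk is the substance of Biquard's local theory of connections with $L^2$ curvature, which I would follow, and it fits into the broader framework of tame harmonic bundles and parabolic structures of Simpson, Biquard and Mochizuki.
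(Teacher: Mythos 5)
The first thing to note is that the paper does not prove this theorem at all: it is imported verbatim from Biquard \cite{bib:biq1} (the citation attached to the statement) and used downstream (Proposition \ref{distribution}, Section \ref{LinSpec}) purely as a black box. So there is no internal proof to compare your proposal against; what can be assessed is whether your sketch faithfully reconstructs the cited argument and whether it would stand on its own.

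On that score your outline is strategically sound and matches the architecture of the actual proof. Your limit-holonomy step is correct, and for a reason worth making explicit: in a radial gauge ($A_r = 0$) one has $F_{r\theta} = \partial_r A_\theta$, and since the flat area form is $r\,dr\,d\theta$ while the pointwise norm of the two-form $F_A$ carries a factor $1/r$, the $L^1$ norm of $F_A$ over an annulus is exactly $\int |F_{r\theta}|\,dr\,d\theta$ --- precisely the quantity that controls the variation of the circle holonomies $h_r$ (by unitarity and Gronwall), so $F_A \in L^2 \subset L^1$ on a finite-area disk does make $\{h_r\}$ Cauchy, with no logarithmic loss. Where your proof is not self-contained is exactly where it matters most: the passage from ``the model connection $d + i\,\mathrm{diag}(\hat\lambda_j^+,\hat\lambda_j^-)\,d\theta$ reproduces the limit holonomy'' to the existence of a gauge in which the remainder satisfies $\|a/r\|_{L^2} + \|\nabla a\|_{L^2} < \infty$ \emph{is} the content of Biquard's theorem, and you invoke it (``which I would follow'') rather than prove it. The difficulty you flag there is genuine: $\|a/r\|_{L^2}$ is the scale-invariant (critical-weight) norm on the cylinder, and the relevant Fredholm theory degenerates at resonant indicial roots (e.g.\ when $\hat\lambda_j^+ - \hat\lambda_j^-$ is an integer), which is why the construction proceeds by Coulomb gauges on dyadic annuli patched together rather than by a single global elliptic estimate. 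In short: as a blind reconstruction of how the cited proof goes, your proposal is accurate and correctly locates the hard step; as a proof, it has the same logical status as the paper's own treatment --- a deferral to \cite{bib:biq1} --- with the merit that you have identified precisely what is being deferred.
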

 Since at this stage we are taking $\hat{E}$ to be the trivial bundle on $C$, it is certainly holomorphic.  (Other extensions of this bundle appearing in this paper will also be holomorphic.) Moreover, we will always be working with connections whose curvature is $L^2-bounded$. Hence, in this paper, it follows from the local representation of the theorem that \eqref{precomplex} will always be well-defined. 

 One may now define natural Hermitian inner products on sections of the sheaves in \eqref{precomplex} by 
\begin{eqnarray} \label{norm}
(\xi, \eta) &=& \int_{\hat{\Si}} \xi \wedge *\bar{\eta}.
\end{eqnarray}
For matrix-valued sections we understand the integrand above to mean $\mbox{Tr} \,\, \xi \wedge *\bar{\eta}$. This norm endows the spaces of global sections of these sheaves with Hilbert space structure. With this one may extend \eqref{precomplex} to be defined for $L^2$ sections over $C$. One also has the dual sequence to \eqref{precomplex},

 \begin{eqnarray} \label{dualcomplex}
\mathcal{E} \xleftarrow{d^*_{\hat{A}}} &\Omega^1(\mathcal{E}) & \xleftarrow{d^*_{\hat{A}}} \Omega^2(\mathcal{E}),
\end{eqnarray}
in terms of the adjoint covarint exterior derivative, $d^*_{\hat{A}}$, with explicit form as given in \eqref{stard} in terms of the $*$-operator.
One may then define covariant Laplacians on each sheaf as

\begin{eqnarray*}
    \Delta_{\hat{A}} = d^*_{\hat{A}}d_{\hat{A}} + d_{\hat{A}}d^*_{\hat{A}}.
\end{eqnarray*}
On $\mathcal{E}$ this reduces to just
$d^*_{\hat{A}}d_{\hat{A}}$ which will be the operator of interest to us (which corresponds to the covariant Laplacian appearing in \eqref{YMH-eqs-psi}).

\section{Orbifold Connections and Holonomy Representations}
\label{OrbConn}

As described in Section 1, a unitary connection $A$ on a bundle $E$ defines a covariant derivative $\nabla_A$ on the local sections of $E$.  Locally this amounts to a system of homogeneous ODE's whose solutions define a parallel transport of sections along paths emanating from a base point $z_0 \in \mathbb{S}$. When the curvature of $A$ is zero, these ODE solutions may be globally continued along any closed path on the punctured sphere $\Si$. Parallel transport along a closed path then produces an automorphism of the initial frame of sections based at $z_0$. With respect to this initial frame, one then gets a representation $\rho$ of the fundamental group $\pi$ (denoting the homotopy group $\pi_1(\Si, z_0)$ of closed paths based at $z_0$), referred to as the {\it holonomy} representation.
For the case of rank 2 vector bundles this is a representation in  $U(2)$.  

\subsection{Homotopy Considerations} \label{homotop}

Taking a point $z_0 \ne z_i$ as a base point let $\gamma_i$ denote the homotopy class of a counterclockwise loop based at $z_0$ that encircles $z_i$ on $\mathbb{S}$, but no other marked points, $z_j$. These four $\gamma_i$ generate the homotopy group of $\Si$, $\pi_1(\Sigma, z_0)$,  subject to the  relation $\gamma_1 \cdots \gamma_4 = 1$.

As mentioned in section \ref{inducedElliptic}, $\hat{\Sigma}$ is an unbranched double cover of $\Sigma$. By standard homotopy theory there is a Galois theoretic correspondence between coverings of a space and normal subgroups of the homotopy group of that space. It follows from this that
\begin{eqnarray*}
    \pi_1(\hat{\Sigma}, P^{(i)}_0) \subset \pi(\Sigma, z_0)
\end{eqnarray*}
is a normal subgroup of index 2. Elements of this subgroup can all be expressed as words of even length in the generators $\gamma_1, \gamma_2, \gamma_3$.

The homotopy group of $C$ itself is abelian, isomorphic to its homology group which in turn is isomorphic to the period lattice $\Gamma$ defined in section \ref{inducedElliptic}. As elements of a subgroup of $\pi = \pi_1(\Sigma, z_0)$ these homology cycles may be related to the  monodromy representation $\rho$. The following lemma makes this precise.
\begin{lemma} \cite{bib:hitch} \label{Hitchen}
$\rho(\pi_1(\hat{\Sigma}, P^{(i)}_0)) \simeq \Gamma$. In terms of the homotopy generators, the matrices ($\in SL(2, \mathbb{C})$) correseponding to the even words in this subgroup satisfy the relations: $\rho_i^2 = -1, \rho_1 \cdots \rho_4 = 1$, where $\rho_i = \rho(\gamma_i)$. The image of the full homotopy group under $\rho$ is a semi-direct product $\Gamma \rtimes G$ where $G$ is generated by the elliptic involution $\iota$; i.e., $\rho(\pi)/\Gamma \simeq G$.  
\end{lemma}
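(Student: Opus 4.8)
The plan is to reduce the statement to the interplay between the covering $\hat{\Sigma}\to\Sigma$ and the \emph{central} nature of the curvature, and then to transport the homotopy data of $C$ through the uniformization $C\cong\mathbb{C}/\Gamma$. First I would record the group theory. Write $\pi_1(\Sigma)=\langle\gamma_1,\dots,\gamma_4\mid\gamma_1\gamma_2\gamma_3\gamma_4=1\rangle$ and let $\epsilon\colon\pi_1(\Sigma)\to G\cong\mathbb{Z}/2$ be the parity homomorphism $\gamma_i\mapsto 1$; it is well defined because the single relation has even total degree $4$. Its kernel is exactly the even words, i.e.\ $\pi_1(\hat{\Sigma})$, so one has the short exact sequence $1\to\pi_1(\hat{\Sigma})\to\pi_1(\Sigma)\to G\to 1$, and $G$ is identified with the deck group $\langle\iota\rangle$. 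Next, using the local branched model $w\mapsto w^2$ near each Weierstrass point, the loop $\gamma_i^2$ is freely homotopic in $\hat{\Sigma}$ to a small loop $\delta_i$ encircling $w_i$; filling in the four punctures then yields a surjection $\pi_1(\hat{\Sigma})\twoheadrightarrow\pi_1(C)\cong H_1(C,\mathbb{Z})\cong\Gamma$, the last isomorphism being the uniformization recalled in Section~\ref{inducedElliptic}, with kernel normally generated by the $\delta_i=\gamma_i^2$.

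The geometric input enters through two observations. Because $A$ has constant curvature $-ib\,\mathbb{I}\otimes\omega$, which is central, the induced connection on the associated $PSL(2,\mathbb{C})$-bundle is flat; hence its holonomy $\bar\rho$ is a genuine homomorphism and the lifted matrices $\rho_i\in SL(2,\mathbb{C})$ are well defined up to sign. The cone angle $\pi$ at each singularity---equivalently Biquard's local normal form (Theorem~\ref{biquard}) with exponents $\hat\lambda_i^\pm=\pm\tfrac14$, so that the local holonomy has eigenvalues $e^{2\pi i(\pm 1/4)}=\pm i$---forces $\rho_i$ to be traceless; Cayley--Hamilton in $SL(2,\mathbb{C})$ then gives $\rho_i^2=-I$. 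Consequently $\rho(\delta_i)=\rho_i^2=-I$ is central, and the relation $\gamma_1\gamma_2\gamma_3\gamma_4=1$ lifts to $\rho_1\cdots\rho_4=\pm I$, the sign being $+I$ after accounting for the parity contributed by the four central factors.

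With these in hand I would assemble the three assertions. Since each $\rho(\delta_i)=-I$ is central, $\bar\rho$ restricted to $\pi_1(\hat{\Sigma})$ descends through $\pi_1(C)\cong\Gamma$; the surface relation $[a,b]\,\delta_1\delta_2\delta_3\delta_4=1$ gives $[\rho(a),\rho(b)]=\bigl(\rho(\delta_1)\cdots\rho(\delta_4)\bigr)^{-1}=\bigl((-I)^4\bigr)^{-1}=I$, so the two torus holonomies $\rho(a),\rho(b)$ commute and the image is abelian. The uniformization correspondence of Hitchin then identifies this commuting pair with the period generators, yielding $\bar\rho(\pi_1(\hat{\Sigma}))\cong\Gamma$. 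Finally, conjugating a $\Gamma$-holonomy by any $\rho_i$ reproduces the deck involution $\iota\colon u\mapsto-u$: the traceless $\rho_i$ must interchange the two common eigenlines of the abelian image (it cannot preserve both without being central), hence it inverts that image and acts as $-1$ on $\Gamma$, consistently with $\iota_*=-1$ on $H_1(C,\mathbb{Z})$. Therefore $\rho(\pi)=\Gamma\rtimes G$ with $\rho(\pi)/\Gamma\cong G$.

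The hard part will be the bookkeeping of the center together with the identification of the \emph{specific} lattice. The isomorphism $\rho(\pi_1(\hat{\Sigma}))\simeq\Gamma$ is literal only after passing to $PSL(2,\mathbb{C})$, where the puncture holonomies $\rho(\delta_i)=-I$ die, since $\Gamma$ is torsion-free while $-I\in\rho(\pi_1(\hat{\Sigma}))\subset SL(2,\mathbb{C})$; I would make this precise by recording the upstairs image as a central $\mathbb{Z}/2$-extension of $\Gamma$ that becomes the claimed isomorphism downstairs, and separately pin down the sign in $\rho_1\cdots\rho_4=+I$. The genuinely substantive point, beyond this homotopy and center bookkeeping, is matching the abstract abelian image to the \emph{period} lattice $\Gamma$ rather than to some anonymous copy of $\mathbb{Z}^2$---this is exactly the content imported from Hitchin's construction \cite{bib:hitch}, and it is where the analytic realization of the connection, not merely its topology, is needed.
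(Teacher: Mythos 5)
The paper offers no proof of this lemma: it is imported verbatim from Hitchin \cite{bib:hitch}, so your reconstruction can only be measured against the statement itself, not against an internal argument. Most of your skeleton is correct and is surely what any proof must contain: the parity homomorphism cutting out $\pi_1(\hat{\Sigma})$ as the even words, the identification of $\gamma_i^2$ with puncture loops $\delta_i$ whose normal closure is the kernel of $\pi_1(\hat{\Sigma})\twoheadrightarrow\pi_1(C)\cong H_1(C,\mathbb{Z})\cong\Gamma$, the derivation of $\rho_i^2=-I$ from the exponents $\pm\tfrac14$ (eigenvalues $\pm i$, trace zero, Cayley--Hamilton), and the commutativity of $\rho(a),\rho(b)$ forced by the punctured-torus relation. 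Your observation that the isomorphism $\rho(\pi_1(\hat{\Sigma}))\simeq\Gamma$ cannot hold literally in $SL(2,\mathbb{C})$ --- the left side contains the torsion element $\rho_i^2=-I$ while $\Gamma$ is torsion free, so the statement is really about the image in $PSL(2,\mathbb{C})$, equivalently modulo the central $\{\pm I\}$ --- is a genuine sharpening of the statement as printed. One simplification: since the lemma concerns the holonomy of a \emph{flat} connection, $\rho$ is an honest homomorphism, so $\rho_1\cdots\rho_4=\rho(\gamma_1\cdots\gamma_4)=I$ is immediate and your detour through projective lifts and sign bookkeeping is unnecessary.

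Two steps, however, do not hold up. First, the linear-algebra argument for the inversion action --- a traceless $\rho_i$ ``cannot preserve both eigenlines without being central'' --- is false: $\rho_i=\diag(i,-i)$ is traceless, satisfies $\rho_i^2=-I$, preserves both eigenlines of the diagonal torus, and is not central in $SL(2,\mathbb{C})$. The correct argument is the purely topological one you relegate to a consistency check: for $c\in\pi_1(\hat{\Sigma})$ one has $\gamma_i c\gamma_i^{-1}\in\pi_1(\hat{\Sigma})$, and its class in $\pi_1(C)\cong\Gamma$ equals $\iota_*[c]=-[c]$, because conjugation by an odd element induces the deck involution up to inner automorphisms, which act trivially after passing to the abelian quotient; hence $\rho_i\rho(c)\rho_i^{-1}=\rho(\gamma_i c\gamma_i^{-1})$ inverts the $\Gamma$-part with no linear algebra at all. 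Second, and more substantively, the claim that the even image is free abelian of rank two is \emph{not} a consequence of the relations $\rho_i^2=-I$, $\rho_1\cdots\rho_4=I$: the choice $\rho_1=\rho_4=\left(\begin{smallmatrix}0&i\\ i&0\end{smallmatrix}\right)$, $\rho_2=\rho_3=\left(\begin{smallmatrix}0&1\\ -1&0\end{smallmatrix}\right)$ satisfies both relations, is irreducible (it generates the quaternion group), yet its even image is cyclic of order $4$; and by the correspondence \eqref{moduli} this is realized by a flat connection with the right exponents. So ``image $\simeq\Gamma$'' genuinely requires the geometric fact that, for the connections arising in Hitchin's construction, the induced homomorphism $\Gamma\to PSL(2,\mathbb{C})$ is injective. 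You correctly identify this as the content imported from \cite{bib:hitch}, but your proof (like the paper's bare citation) does not establish it; it should at least be stated as a standing non-degeneracy hypothesis, in the spirit of the non-parabolic alternative of Lemma \ref{lem:Hitchen}.
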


\subsection{YMH Solutions on the Orbifold}

We can now define what we mean by the orbifold YMH equations and free energy, generalizing what was defined for the smooth case in Section \ref{YMH}. We consider pairs $(\Psi,\hat{A})$ on $\hat{E}$ for which the free energy \eqref{YMHen} is finite. (The norms in the energy are defined with respect to \eqref{norm}.) From this it clearly follows that $\hat{A}$ has finite curvature and so Theorem \ref{biquard} applies.  The formal Euler-Lagrange equations of this free energy reproduce the YMH equations \eqref{YMH-eqs} with the operators appearing there understood in the sense defined in Section \ref{sec:HOrb}. Finally to define a solution for YMH on $(\Si, E)$ we additionally require that, for the solution $(\Psi, \hat{A})$, $\hat{A}$ is the pull-back of a connection $A$ on $E$ and that $\Psi$ be equivariant with respect to the action of the elliptic involution $\iota$ defined in \eqref{invol}. This means that 
$\iota^* \Psi = \rho(\iota) \Psi$.  

We detail this in the following definitions and proposition.
\begin{definition} \label{YMH-orb}
 A solution to the YMH equations on the orbifold $(\Si, E)$ is defined to be a finite energy critical point $(\Psi, \hat{A})$ on $\hat{\Si}$ such that $\hat{A}$ is the pull-back of a connection $A$ on $\Si$ and 
 $\Psi$ is equivariant with respect to $\iota$. 
\end{definition}

\begin{definition} \label{sobolev}
    We let $\mathcal{H}^s$ denote  the space of Sobolev sections of $\hat{E}$, extended over $C$, of order $s$,
    and $\vec{\mathcal{H}^s}$ denote the Sobolev space matrix-valued sections of  $\Omega^1(\mathcal{E})$ of order $s$.  By {\it extension over} $C$ one means that the bundle is extended over the Weierstrass points, $w_i$, on $C$ so that the fibers of $\hat{E}$ over the two sheets of the elliptic curve coincide in the limit as they  approach $w_i$.
\end{definition}

\begin{proposition} \label{distribution}
 If $(\Psi, \hat{A})$ has finite YMH energy, \eqref{YMHen}, then $\Psi \in \mathcal{H}^1 \cap L^4$  and
$\hat{A}  \in  \vec{\mathcal{H}^1}$. Given  this, the variational equations \eqref{YMH-eqs} are  defined in the distributional sense on $\mathcal{H}^1 \times \vec{\mathcal{H}}^1$.    
\end{proposition}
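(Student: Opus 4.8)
The plan is to read off the three required facts directly from the three nonnegative summands of the finite energy \eqref{YMHen}, and then to close the argument with the two-dimensional Sobolev embedding and H\"older's inequality. Everything has been pulled back in Section \ref{sec:HOrb} to the smooth compact elliptic curve $C = \mathbb{C}/\Gamma$ with flat metric $|du|^2$, so away from the four Weierstrass points $w_j$ the analysis is that of ordinary Sobolev spaces on a compact surface, while at the $w_j$ it is governed by the Biquard normal form of Theorem \ref{biquard}. Accordingly I read $\mathcal{H}^1$ and $\vec{\mathcal{H}^1}$ as the orbifold (covariant) Sobolev spaces adapted to $\hat{A}$ near the punctures, whose local model at each $w_j$ is exactly the weight term $i\,\mathrm{diag}(\hat\lambda_j^+,\hat\lambda_j^-)\,d\theta$ appearing in that theorem, with the weights fixed by the holonomy $\rho_i^2 = -1$ of Lemma \ref{Hitchen}.

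First the elementary extractions. Finiteness of the potential term $\frac{\kappa^2}{2}\| |\Psi|^2 - 1 \|^2$ gives $|\Psi|^2 - 1 \in L^2(C)$; since $C$ has finite area the constants lie in $L^2$, whence $|\Psi|^2 \in L^2$, i.e. $\Psi \in L^4$ --- a statement about the pointwise fibre norm $|\Psi|$ that is insensitive to the singular connection. Finiteness of the first term gives $\nabla_{\hat{A}}\Psi \in L^2$, and since $\Psi \in L^4 \subset L^2$ on the finite-area surface, this is precisely the covariant membership $\Psi \in \mathcal{H}^1$. Finiteness of the middle term gives the curvature $F_{\hat{A}} = d_{\hat{A}}\hat{A} \in L^2$.

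Upgrading this curvature bound to $\hat{A} \in \vec{\mathcal{H}^1}$ is the crux, and I expect it to be the main obstacle: the curvature is gauge invariant whereas the $H^1$-size of $\hat{A}$ is not, so $L^2$ curvature alone cannot control the connection. One must therefore select a good gauge. On $C \setminus \{w_1,\dots,w_4\}$ a Coulomb (Uhlenbeck) gauge turns the $L^2$ bound on $F_{\hat{A}}$ into an $H^1$ bound on $\hat{A}$, while in a deleted neighbourhood of each $w_j$ Theorem \ref{biquard} provides a gauge in which $d_{\hat{A}} = d + i\,\mathrm{diag}(\hat\lambda_j^+,\hat\lambda_j^-)\,d\theta + a$ with $\|a/r\|_{L^2} + \|\nabla a\|_{L^2} < \infty$. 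The weight term is the flat model of $\vec{\mathcal{H}^1}$, and on the local chart $\|a\|_{L^2} \le r_0 \|a/r\|_{L^2} < \infty$, so together with $\|\nabla a\|_{L^2} < \infty$ one gets $a \in H^1$ near each puncture; patching then yields $\hat{A} \in \vec{\mathcal{H}^1}$. This is exactly the step where the imported Theorem \ref{biquard} does the real analytic work.

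It remains to check that \eqref{YMH-eqs} is meaningful distributionally on $\mathcal{H}^1 \times \vec{\mathcal{H}^1}$, which I would do by testing each equation against smooth data compactly supported in $\hat{\Si}$ and verifying integrability with H\"older and the covariant two-dimensional embedding $\mathcal{H}^1 \hookrightarrow L^p$, $p < \infty$ (which follows from the diamagnetic inequality $|\nabla |\Psi|| \le |\nabla_{\hat{A}}\Psi|$ and the ordinary embedding $H^1 \hookrightarrow L^p$). For \eqref{YMH-eqs-psi} the weak form $\langle \nabla_{\hat{A}}\Psi, \nabla_{\hat{A}}\eta\rangle = \frac{\kappa^2}{2}\langle (1-|\Psi|^2)\Psi,\eta\rangle$ converges because both factors on the left lie in $L^2$, while on the right $(1-|\Psi|^2)\Psi \in L^{4/3}$ (from $\Psi \in L^4$) pairs with $\eta \in L^4$. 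For \eqref{YMH-eqs-A} the left side $\langle F_{\hat{A}}, d_{\hat{A}}B\rangle$ is finite since $F_{\hat{A}} \in L^2$ and $d_{\hat{A}}B = dB + \frac12[\hat{A},B] \in L^2$ (the commutator being supported away from the punctures, where $\hat{A} \in L^4$), while the current $J(\Psi,\hat{A}) = 2\,\im(\bar\Psi \otimes d_{\hat{A}}\Psi) \in L^{4/3}$ --- a product of $\bar\Psi \in L^4$ and $d_{\hat{A}}\Psi \in L^2$ --- pairs with $B \in L^4$. Every pairing converges, so both equations are well defined in the distributional sense on the stated spaces.
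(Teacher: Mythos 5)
Your proposal is correct and its skeleton matches the paper's proof exactly: read $\Psi\in L^4$ off the potential term, $\Psi\in\mathcal{H}^1$ off the kinetic term, $\hat{A}\in\vec{\mathcal{H}}^1$ off the $L^2$ curvature bound via Theorem \ref{biquard}, and then verify the weak formulation by H\"older-type pairings. Where you genuinely diverge is at the two points where the paper is tersest, and in both you are more careful. For $\Psi\in\mathcal{H}^1$ the paper estimates the ordinary gradient via $\|\nabla\Psi\|\le\|\nabla_{\hat{A}}\Psi\|+\|\hat{A}\|\,\|\Psi\|$, invoking Theorem \ref{biquard} to bound $\|\hat{A}\|$; as written this is not a valid $L^2$ product estimate (it would need $\hat{A}\in L^\infty$, or H\"older with $\hat{A}\in L^4$, and the Biquard model term $i\,\mathrm{diag}(\hat{\lambda}_j^+,\hat{\lambda}_j^-)\,d\theta\sim 1/r$ is not even square-integrable near a puncture), so your covariant reading of $\mathcal{H}^1$ --- or equivalently your diamagnetic-inequality route --- is the cleaner way to make this step rigorous. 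For $\hat{A}\in\vec{\mathcal{H}}^1$ you correctly flag the issue the paper elides: an $L^2$ bound on $F_{\hat{A}}$ is gauge-invariant while the $H^1$ size of $\hat{A}$ is not, so the statement only makes sense after a gauge is fixed; your mechanism (Coulomb/Uhlenbeck gauge in the bulk, Biquard's gauge near the $w_j$, singular model term absorbed into the definition of the weighted space, then patching) supplies what the paper's bare citation of Theorem \ref{biquard} leaves implicit. Your explicit exponents in the pairings ($L^2\cdot L^2$ on the left sides, $L^{4/3}\cdot L^4$ on the right sides) are just a spelled-out version of the paper's appeal to Cauchy--Schwarz, Jensen and H\"older. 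The only caveat is that your bulk gauge-fixing and patching steps are sketched rather than executed (Uhlenbeck's theorem needs local curvature smallness or a global Coulomb-gauge argument on the compact surface), but the ideas are the right ones and, if anything, they repair the loose points of the paper's own argument rather than introduce gaps.
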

\begin{proof}
    It follows from the finiteness of the energy  that
    \begin{eqnarray} \label{est1}
        ||(|\Psi|^2 - 1)||^2 &<& C_1^2 < \infty \\ \label{est2}
        ||\nabla_{\hat{A}} \Psi||^2 &<& C_2^2 < \infty\\ \label{est3}
        ||d_{\hat{A}} \hat{A}||^2 &<& \infty
    \end{eqnarray}
    From \eqref{est1} we have 
    \begin{eqnarray*}
     || \, |\Psi|^2||   &=& ||(|\Psi|^2 - 1) + 1||   \\
     &\leq& ||(|\Psi|^2 - 1)|| + || 1 ||\\
     &=& C_1 + \mbox{area}(C) < \infty,
    \end{eqnarray*}
    from which it follows that $\Psi \in L^4$. Since $C$ has finite area, it  follows that $\Psi $ is also in $L^2$. By \eqref{est2}, 
    \begin{eqnarray*}
||\nabla \Psi|| &=&  
||\nabla \Psi +  \hat{A}\Psi - \hat{A}\Psi|| \\
&\leq& ||\nabla \Psi +  \hat{A}\Psi ||  + || \hat{A} \Psi ||\\
&\leq&  ||\nabla_{\hat{A}} \Psi|| + ||\hat{A}|| || \Psi ||\\ 
&\leq  & C_2 + || \hat{A}|| || \Psi || <  \infty
    \end{eqnarray*}
    where  $|| \hat{A}  ||$ is bounded by  Theorem \ref{biquard} and $||\Psi||$
    is bounded as we saw from the previous inequalities. It follows that
    $\Psi \in \mathcal{H}^1 \cap L^4$. Since $d_{\hat{A}} \hat{A} = F_{\hat{A}}$ ,  \eqref{est3}  means  that the curvature is bounded and so, by Theorem \ref{biquard}, $$\hat{A}  \in  \vec{\mathcal{H}}^1.$$ 
\smallskip

    The variation of the energy \eqref{YMHen} at a minimum, $(\Psi, \hat{A})$, 
    yields the equations
    \begin{eqnarray} \label{YMHvarpsi}
        (\nabla_A \Psi, \nabla_A \Phi) &=& \left(\frac{\kappa^2}{2} (1 - |\Psi|^2)\Psi, \Phi  \right)\\ \label{YMHvaralpha}
        (d_A A, d_A \alpha) &=& (J(\Psi, A), \alpha)
    \end{eqnarray}
    for any $\Phi, \alpha$ where the latter are Schwarz class sections.  For classical solutions, equations \eqref{YMH-eqs} are equivalent to the previous equations. For the finite energy sections this variational formulation defines the notion of a distributional solution to equations \eqref{YMH-eqs}.  That equations \eqref{YMHvarpsi} and \eqref{YMHvaralpha} are well-defined follows from the first part of this proposition and several applications of Cauchy-Schwarz and an application of Jensen's and H\"older's inequalities on a compact surface to control the RHS of \eqref{YMHvarpsi}.   
\end{proof}

\subsection{Moduli of Orbifold Connections}

The homotopy considerations presented in section \ref{homotop} lead to an alternative description of connections in terms of the homotopy group of the punctured sphere. Letting $\rho$ denote a representation of $\pi$ on $\mathbb{C}^2$. Then $\pi$ lifts to an action by bundle automorphisms on the trivial $\mathbb{C}^2$ - bundle  
$$
\mathbb{C} \times \mathbb{C}^2 \to \mathbb{C} 
$$
defined by
\begin{eqnarray*}
    \Gamma  \times \mathbb{C} \times \mathbb{C}^2 &\to & \mathbb{C} \times \mathbb{C}^2\\
    \gamma \times (u,v) &\mapsto& 
    (\gamma u , \rho(\gamma) v).
\end{eqnarray*}
This is a proper action with quotient being the bundle
\begin{eqnarray*}
   (\mathbb{C} \times \mathbb{C}^2)/\Gamma  \to \mathbb{C}/\Gamma  = C
\end{eqnarray*}
which we'll denote by $\hat{E}_\rho$. The elliptic involution $(u \to -u)$ factors through this bundle projection to give the bundle
\begin{eqnarray*}
 (\mathbb{C}/G \times \mathbb{C}^2)/\Gamma 
 = (\mathbb{C} \times \mathbb{C}^2)/(\Gamma \rtimes G)
 \to G \backslash\mathbb{C}/\Gamma  = \mathbb{S}   
\end{eqnarray*}
which we denote by $E_\rho$. $\hat{E}_\rho$ is the pullback of the bundle ${E}_\rho$ from $\mathbb{S}$ to $C$.
\bigskip

Given a connection on $\mathbb{S}$ with regular singular points at the branch points $z_j$, let  $\lambda_j^\pm$ be the eigenvalues of the connection at the $z_j$. We consider the collection of all such connections having these same eigenvalues at the branch points. We'll refer to these as  $\lambda_j^\pm$-connections. 
 Based on what has just been described, one may define a map
 \begin{eqnarray*}
     [\mbox{flat  $\lambda_j^\pm$ smooth connections on}\,\, E_{|\Si}] \to Hom(\pi, GL(2, \mathbb{C}))
 \end{eqnarray*}
by mapping such a connection to its holonomy representation.  One can  show \cite{bib:koby, bib:ms} that given any such representation of $\pi$ there is a connection on $E_{|\Si}$, unique up to gauge transformation, whose holonomy realizes that representation. So in fact one has a homeomorphism
\begin{eqnarray} \label{moduli}
     [\mbox{flat  $\lambda_j^\pm$-connections}/\{ \mbox{gauge equivalence}\}] \xrightarrow \sim Hom(\pi, GL(2, \mathbb{C}))/ GL(2, \mathbb{C}).
 \end{eqnarray}
The quotient by $GL(2, \mathbb{C})$ on the right corresponds to the freedom to choose different initial frames at the base point $z_0$.

There are natural global coordinates that can be prescribed for\\  $Hom(\pi, GL(2, \mathbb{C}))/ GL(2, \mathbb{C})$
that were introduced by Seshadri \cite{bib:sesh} in the more general setting of a punctured Riemann surface. The moduli space \eqref{moduli} can be coordinatized as the fiber over the identity, $\mathbb{I}$, in the multiplication map
\begin{eqnarray} \label{fibermoduli}
    C_1 \times C_2 \times C_3 \times C_4 \to GL(2, \mathbb{C})
\end{eqnarray}
where $C_j$ is the conjugacy class of matrices with eigenvalues  $\exp 2 \pi i \lambda^+_j, \exp 2 \pi i \lambda^-_j$.

\subsection{Parabolic Degree}
In the case of a compact Riemann surface, the degree of a bundle supporting a flat connection must be 0. We consider a more general notion of bundle type and degree for a punctured surface such as $\Si$. We are considering trivial $GL(2, \mathbb{C})$ bundles on the sphere but with connections having poles of order 1 at each of the marked points $z_j$ (Fuchsian connections) and eigenvalues $\lambda_j^\pm$ of the matrix residues fixed. As before, we'll refer to these as as $(\lambda_1^\pm; \dots; \lambda_4^\pm)$-bundles. By the homeomorphism \eqref{moduli} these bundles correspond to those $E_\rho$ for which the generators of $\pi$ map to matrices with respective eigenvalues $\lambda_j^+, \lambda_j^-$. 
If the structure group is reduced to $U(2)$, which will be the case for unitary connections, the $\lambda_i^\pm$ are real.

Following \cite{bib:sesh} one may  extend the notion of degree to the orbifold setting.  Given a $(\lambda_1^\pm; \dots; \lambda_4^\pm)$-bundle, $E$, define its {\it parabolic degree} to be
\begin{eqnarray} \label{pardeg}
    \deg \mbox{par} E = \deg E +
    \sum_{j=1}^4 (\lambda_j^+ + \lambda_j^-).
\end{eqnarray}
For this one has an extension of the Chern-Weil  formula \eqref{CW},

\begin{eqnarray} \label{CWOrb}
       -\frac{1}{2 \pi i} \int_\Si \mbox{Tr} F_A
       = \deg \mbox{par} E.
 \end{eqnarray}
(See for instance \cite{bib:biq}.)
The bundles $E_\rho$ constructed earlier have degree 0, i.e. they are flat. 
It then follows from the residue theorem that $\deg \mbox{par} E = 0$ as well. 
\smallskip

One can also build non-flat bundles, $\tilde{E}$, by {\it twisting} the transition functions for 
$ E_\rho$ by that of a non-trivial line bundle $L$ (smooth over $\mathbb{S}$) with a given degree. Then the degree is given by $ \deg \tilde{E} = \deg \left(E_{\rho} \otimes L\right) = 2 \deg L$ with $\deg \mbox{par} \tilde{E}$ then given by \eqref{pardeg}. The moduli space in this case is coordinatized by  the fiber over $(\deg \mbox{par} \tilde{E}) \,\, \mathbb{I}$ in \eqref{fibermoduli}. 
\smallskip

With these definitions in place we can also extend the expression \eqref{magflux} for the total magnetic flux to
\begin{eqnarray*} 
    b = \frac{2 \pi \deg \mbox{par} \tilde{E}}{|\Sigma| m(E)}
\end{eqnarray*}
which in our case becomes
\begin{eqnarray} \label{extmagflux}
    b &=& \frac{2  \pi \cdot 2\deg L}{|\Si| \cdot 2}\\  \nonumber
    &=&  \frac{2  \pi \deg L}{|\Si|}. 
\end{eqnarray}

In what follows we are going to assume that $\lambda_j^\pm = \pm 1/4$. This case allows us to make use of calculations done elsewhere  \cite{bib:hitch, bib:dikz}.

\section{Holomorphization: Holonomy into Monodromy} \label{sec:Holo}

Following our considerations in Section \ref{OrbConn} we may regard the unitary connection
$A$ as a one form taking its values in general $2 \times 2$ matrices over $\mathbb{C}$ rather than restricted to skew Hermitian matrices. With this in mind,  one can decompose it with respect to the complex structure on the Riemann surface as
\begin{eqnarray*}
    A &=& A^{(1,0)} + A^{(0,1)} 
\end{eqnarray*}
where $A^{(1,0)}$ is a matrix multiple of $dz$ and $A^{(0,1)}$ is a matrix multiple of $d\bar{z}$. By extension, the covariant exterior derivative decomposes as
\begin{eqnarray*}
    d_A &=& \partial_A' + \partial_A''\\
   \partial_A' &=& \partial +  A^{(1,0)}\\
   \partial_A'' &=& \bar{\partial} + A^{(0,1)}
\end{eqnarray*}
where $\partial = \frac{\partial}{\partial z} \wedge dz$ and 
$\bar{\partial} = \frac{\partial}{\partial \bar{z}} \wedge d\bar{z}$.

The following fundamental result is a consequence of the Newlander-Nirenberg theorem,

\begin{theorem} \cite{bib:koby} \label{thm:NN}
Given a smooth complex vector bundle $E$ over a complex manifold $M$ with connection $\nabla_A$ such that 
$\partial_A'' \circ \partial_A'' = 0$, there is a unique holomorphic vector bundle structure on $E$ such that $\partial_A'' = \bar{\partial}$. In other words, $A^{(0,1)} \equiv 0$.
\end{theorem}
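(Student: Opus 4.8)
The plan is to recognize the hypothesis as an integrability condition and to invoke the Newlander--Nirenberg theorem (in the bundle form due to Koszul and Malgrange) to produce local holomorphic frames, then to patch them and read off the vanishing of $A^{(0,1)}$. First I would observe that $\partial_A'' = \bar\partial + A^{(0,1)}$ obeys the Leibniz rule $\partial_A''(f s) = (\bar\partial f)\, s + f\, \partial_A'' s$ for smooth functions $f$ and sections $s$, since $\bar\partial$ does and $A^{(0,1)}$ is tensorial. Thus $\partial_A''$ is a Dolbeault ($\bar\partial$-type) operator on $E$, and the natural candidate for the sheaf of holomorphic sections is $\mathcal{O}(E) := \ker \partial_A''$, the locally defined solutions of $\partial_A'' s = 0$. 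The entire statement reduces to showing that this sheaf is locally free of rank equal to $\operatorname{rank} E$; equivalently, that near each point there exists a full frame $\sigma = (\sigma_1, \dots, \sigma_r)$ of pointwise-independent solutions of $\partial_A''\sigma = 0$.

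The analytic heart is this local existence of a frame, and it is exactly where the hypothesis $\partial_A'' \circ \partial_A'' = 0$ is used. One clean route is to transport the problem to the total space of $E$: the operator $\partial_A''$ determines an almost complex structure on the total space whose integrability is equivalent to $\partial_A''^2 = 0$, so the hypothesis makes that structure integrable; Newlander--Nirenberg then gives local complex coordinates in which the projection $E \to M$ is holomorphic and the fibres are complex-linear, which is precisely a local holomorphic frame. For the purpose of this paper it is worth noting a dramatic simplification: here $M$ is a Riemann surface, so $\Omega^{0,2}(E) = 0$ and the condition $\partial_A''^2 = 0$ holds automatically; the theorem then amounts only to the local solvability of $\bar\partial \sigma = -A^{(0,1)} \sigma$, which follows from the $\bar\partial$-Poincar\'e lemma together with a contraction argument to absorb the zeroth-order term $A^{(0,1)}$.

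With local holomorphic frames in hand, I would patch them. On an overlap, two such frames are related by a matrix $g$ of functions, and applying Leibniz gives $\partial_A''(\sigma g) = (\partial_A''\sigma)\, g + \sigma\, \bar\partial g$; since both frames are annihilated by $\partial_A''$ and $\sigma$ is pointwise invertible, this forces $\bar\partial g = 0$, so the transition functions are holomorphic and define a genuine holomorphic vector bundle structure. Moreover, writing an arbitrary section as $s = \sigma v$ in such a frame yields $\partial_A'' s = \sigma\, \bar\partial v$, i.e. in this gauge $\partial_A''$ acts as $\bar\partial$ on components and $A^{(0,1)} \equiv 0$, which is the asserted conclusion. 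Uniqueness is then immediate: any holomorphic structure for which $\partial_A'' = \bar\partial$ must have $\ker \partial_A''$ as its sheaf of holomorphic sections, and a holomorphic vector bundle is determined by this sheaf, so the structure is forced.

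The step I expect to be the genuine obstacle is the local existence of a holomorphic frame in Step~2. In rank $1$, or on a Riemann surface as in our application, it is elementary. In general rank and dimension it is not provable by soft arguments and is the full content of Newlander--Nirenberg; the honest summary is that this theorem is a repackaging of that analytic result rather than a consequence of formal manipulations.
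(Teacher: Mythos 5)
The paper does not prove this theorem at all --- it is quoted directly from Kobayashi \cite{bib:koby}, where it is the Koszul--Malgrange theorem --- so there is no internal proof to compare against. Your sketch is correct and is essentially the standard argument of that reference: local frames of $\partial_A''$-flat sections from Newlander--Nirenberg (or, over a Riemann surface, where integrability $\partial_A''\circ\partial_A''=0$ is automatic --- the same point the paper makes immediately after the theorem --- from local solvability of $\bar{\partial}\sigma = -A^{(0,1)}\sigma$ by a contraction/Neumann-series argument), holomorphic transition matrices via the Leibniz rule, $A^{(0,1)}\equiv 0$ in the resulting gauge, and uniqueness because any compatible holomorphic structure must have $\ker \partial_A''$ as its sheaf of holomorphic sections; your closing acknowledgment that the analytic core is Newlander--Nirenberg itself, rather than a formal manipulation, is exactly the honest way to present it.
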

In the case that $M$ is a Riemann surface, the type decomposition of the curvature tensor is 
\begin{eqnarray*}
    F_A = F_A^{(2,0)} + F_A^{(1,1)}  + F_A^{(0,2)},
\end{eqnarray*}
but   $F_A^{(2,0)} = 0 = F_A^{(0,2)}$
since $dz \wedge dz = 0 = d\bar{z} \wedge d\bar{z}$. Because $F_A^{(0,2)} = \partial_A'' \circ \partial_A''$, the condition of the Theorem is satisfied for any Riemann surface; this includes $\Si$ and $C$ from the previous sections.

If the connection is flat, then
\begin{eqnarray*}
 0 = d_A A &=& (\partial_A' + \partial_A'') (A^{(1,0)} + A^{(0,1)} ) \\
 &=& (\partial_A' + \partial_A'') A^{(1,0)} \\
 &=& \partial_A'' A^{(1,0)}\\
 &=& \bar{\partial} A^{(1,0)}.
\end{eqnarray*}
It follows that the coefficients of 
$A^{(1,0)}$ are holomorphic. In the case of $\Si$, the growth conditions we've required near the $z_i$ implies, as a consequence of the Riemann singularity theorem,
that the coefficients of $A^{(1,0)}$ extend to be meromorphic at the $z_i$ with simple poles. Hence $\partial_A'$ is a classical Fuchsian differential operator. The condition for parallel translation of a section, $\Psi$, around a closed loop $\gamma$, $\nabla_A \Psi = 0$, is now expressed as solving  
\begin{eqnarray} \label{FuchsOp}
  \partial_A' \Psi =  (\partial +  A^{(1,0)}) \Psi = 0.
\end{eqnarray}
Parallel translation is then reduced to the analytic continuation of the solution to the meromorphic, Fuchsian ODE on  $\mathbb{S}$,  

\begin{eqnarray} \label{FuchsEqn}
    \frac{d\Psi}{dz} =  
    \sum_{k  = 1}^3 \frac{A_k}{z - z_k}  \Psi(z)
\end{eqnarray}
where the residue matrices $A_k$ have eigenvalues $\pm 1/4$.
The holonomy of the connection is thus identified with the monodromy of this ODE.

\begin{remark}
In this holomorphic setting the sequence \eqref{precomplex} reduces to the pair of sequences
\begin{eqnarray} \label{complex}
\mathcal{E} \xrightarrow{\partial'_{\hat{A}}} &\Omega^1(\mathcal{E}) & \xrightarrow{\partial'_{\hat{A}}} \Omega^2(\mathcal{E})\\ \label{complex2}
\mathcal{E} \xrightarrow{\partial''} &\Omega^1(\mathcal{E}) & \xrightarrow{\partial''} \Omega^2(\mathcal{E}).
\end{eqnarray} 
Since $\partial_A' \circ \partial_A' = 0$ and $\bar{\partial}^2 =0$, these are elliptic complexes, which generalize the notion of an elliptic operator. There is an analogous reduction to dual complexes for \eqref{dualcomplex}.
\end{remark}

\subsection{Constant Curvature}

At  the  end of Section \ref{OrbConn} we mentioned that a flat bundle could be transformed to a bundle of positive degree by twisting with a positive line bundle. To see how this is realized at the level of connections in our case, we take the unique holomorphic line bundle of degree 1 on $\mathbb{S}$, usually referred to as the {\it hyperplane bundle}. It has a constant curvature connection (associated to the Fubini-Study metric) given by
\begin{eqnarray} \label{FSConn}
        a^{(1,0)} = \frac{-\bar{z} dz}{1 + |z|^2}.
\end{eqnarray}


\section{Splitting the Monodromy}
\label{sec:Split}

 To further analyze the monodromy of $\partial_{A}'$ on $\Si$, we will make use of the following lemma. 
\begin{lemma} \cite{bib:hitch} \label{lem:Hitchen}
    Let $\rho$ be the monodromy representation associated to $\partial_{A}'$ with $\Gamma$ being  the abelian subgroup of this representation defined in Lemma \ref{Hitchen}.  It  is contained in a 1-parameter subgroup of SL(2, $\mathbb{C}$) of the form either 
    \begin{eqnarray*}
        \left\{ \left( \begin{array}{cc}
        \lambda & 0 \\ 0 & \lambda^{-1}
        \end{array}\right)\right\} &{or}& 
      \left\{ \pm \left( \begin{array}{cc}
        1 & \mu \\ 0 & 1
        \end{array}\right)\right\} . 
    \end{eqnarray*}
    The generators, $\rho_1, \dots, \rho_4$ of $\rho$ are of the form 
    \begin{eqnarray*}
         \left( \begin{array}{cc}
         0 & a \\ -a^{-1} & 0
        \end{array}\right) &{or}& 
       \pm \left( \begin{array}{cc}
        i & b \\ 0 & -i
        \end{array}\right) 
    \end{eqnarray*}
    in the respective previous cases related to $\Gamma$.
\end{lemma}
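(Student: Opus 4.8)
The plan is to reduce the statement to the classification of abelian subgroups of $SL(2,\C)$ together with a computation of their normalizers, feeding in the two structural facts already recorded in Lemma \ref{Hitchen}: first, that the image $\rho(\pi_1(\hat{\Sigma}))\cong\Gamma$ is an \emph{abelian} subgroup of $SL(2,\C)$; and second, that the full image is the semidirect product $\Gamma\rtimes G$ with $G=\langle\iota\rangle\cong\Z/2$. The eigenvalue normalization $\lambda_j^\pm=\pm 1/4$ enters through the local monodromies: each generator $\rho_j=\rho(\gamma_j)$ is conjugate to $\exp(2\pi i A_j)$ and hence has eigenvalues $e^{\pm i\pi/2}=\pm i$, so its characteristic polynomial is $t^2+1$. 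In particular $\tr\rho_j=0$, $\det\rho_j=1$, and, the eigenvalues being distinct, $\rho_j^2=-I$, consistent with Lemma \ref{Hitchen}.

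For the first assertion I would invoke the dichotomy for commuting families in $SL(2,\C)$ that are not contained in the center $\{\pm I\}$. If $\Gamma$ contains an element with distinct eigenvalues $\lambda\neq\lambda^{-1}$, then everything commuting with it must preserve its two eigenlines, so $\Gamma$ is simultaneously diagonalizable and lies in the maximal torus $\{\diag(\lambda,\lambda^{-1})\}$. Otherwise every non-central element of $\Gamma$ is parabolic and the commuting elements share a single eigenline; conjugating that line to $\C e_1$ places $\Gamma$ inside the $\pm$-unipotent group $\{\pm\begin{pmatrix}1&\mu\\0&1\end{pmatrix}\}$. These are exactly the two one-parameter families in the statement, and the degenerate possibility $\Gamma\subseteq\{\pm I\}$ is absorbed into either alternative.

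The identification of the generators then rests on two inputs for each $\rho_j$: that it squares to $-I$ (equivalently, trace $0$ and determinant $1$), and that, since the single loop $\gamma_j$ is an odd word projecting to the nontrivial coset of $\Gamma$ in $\Gamma\rtimes G$, conjugation by $\rho_j$ realizes the action of $\iota$ on $\Gamma$. Because $\iota\colon u\mapsto -u$ induces $-\mathrm{id}$ on $H_1(C)\cong\Gamma$, this action is inversion, $\rho_j\,\g\,\rho_j^{-1}=\g^{-1}$ for all $\g\in\Gamma$. In the diagonal case this forces $\rho_j$ to interchange the two eigenlines, hence to be anti-diagonal; imposing $\det=1$ and $\rho_j^2=-I$ pins it down to $\begin{pmatrix}0&a\\-a^{-1}&0\end{pmatrix}$. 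In the unipotent case $\rho_j$ must preserve the common eigenline $\C e_1$ and so lies in the Borel subgroup; a short computation shows that conjugation by $\begin{pmatrix}s&t\\0&s^{-1}\end{pmatrix}$ scales the unipotent parameter by $s^2$, so inversion demands $s^2=-1$, i.e. $s=\pm i$, which together with the overall sign yields $\pm\begin{pmatrix}i&b\\0&-i\end{pmatrix}$.

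The step I expect to be the main obstacle is the passage from the abstract group-extension data of Lemma \ref{Hitchen} to the concrete conjugation relation $\rho_j\,\g\,\rho_j^{-1}=\g^{-1}$: one must verify both that each single loop $\gamma_j$ genuinely lands in the nontrivial coset and that the $G$-action in the semidirect product is precisely geometric inversion on the period lattice rather than some other order-two automorphism. Once this is secured, the matrix identifications are routine normalizer calculations. A secondary point requiring care is to confirm that the two alternatives are mutually exclusive and exhaustive for the $\Gamma$ at hand — in particular that $\Gamma$ is not central, which would make its normalizer all of $SL(2,\C)$ and leave the form of $\rho_j$ undetermined; this is guaranteed precisely when the underlying flat connection, equivalently the Fuchsian system \eqref{FuchsEqn}, is irreducible, which is the case of interest.
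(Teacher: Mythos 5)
The paper itself contains no proof of this lemma --- it is imported verbatim from Hitchin's paper \cite{bib:hitch} --- so there is no internal argument to compare against; what you have written is essentially the argument of the cited source, and its main thrust is correct. Your two pillars are exactly the right ones: (i) the standard dichotomy that an abelian subgroup of $SL(2,\mathbb{C})$ not contained in the center lies either in a maximal torus or in the $\pm$-unipotent subgroup of a Borel, and (ii) the fact that conjugation by any $\rho_j$ inverts $\Gamma$, because $\gamma_j$ is an odd word representing the nontrivial deck transformation and $\iota\colon u\mapsto -u$ induces $-\mathrm{id}$ on $H_1(C)\cong\Gamma$. Your normalizer computations (inversion of a non-central diagonal element forces the eigenlines to be swapped, hence anti-diagonal form with determinant $1$; conjugation in the Borel scales the unipotent parameter by $s^2$, so inversion forces $s=\pm i$) are correct.

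One repair is needed. Your closing claim that non-centrality of $\Gamma$ holds ``precisely when'' the Fuchsian system is irreducible is false as a biconditional: the parabolic alternative is reducible (all matrices share the eigenline $\mathbb{C}e_1$) yet has non-central $\Gamma$. More importantly, the lemma as stated carries no irreducibility hypothesis, so you cannot dispose of the degenerate case $\Gamma\subseteq\{\pm I\}$ by declaring irreducibility to be ``the case of interest.'' Fortunately that case is handled in one line rather than excluded: if $\Gamma$ is central, then every length-two even word $\rho_j\rho_k$ equals $\pm I$, so (using $\rho_j^{-1}=-\rho_j$) all four generators equal $\pm\rho_1$; since $\rho_1$ has the distinct eigenvalues $\pm i$ it is diagonalizable, and in its eigenbasis every generator has the second stated form with $b=0$, while $\Gamma\subseteq\{\pm I\}$ sits inside either one-parameter family. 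With that substitution in place of the irreducibility remark, your proof is complete and covers all cases of the statement.
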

\noindent Going forward we will assume we are in the first (non-parabolic) case.
\smallskip

As a consequence of Theorem \ref{thm:NN} we have that the bundle $\hat{E}$ with connection $\hat{A}$ on $C$ that were defined in Section \ref{sec:HOrb} determine a unique holomorphic structure on $\hat{E}$. Going forward, unless stated otherwise, we will always take $\hat{E}$ to denote this holomorphic bundle.

\subsection{Invariant Sub-bundles} \label{invariant}
We now examine the structure of $\hat{E}$, building on an approach introduced in \cite{bib:hitch}.  In the neighborhood of the Weierstrass points,  $w_i$, the equation has an expansion of the form $z - z_i \sim y^2 + \cdots$. Since the residues of $\partial_{A}'$ near $z_i$ are $\pm 1/4$, this form of the branching shows that $\partial_{\hat{A}}'$ near a Weierstrass point has eigenvalues $\pm 1/2$. It then follows from direct local series expansion at a regular singular point that a covariant constant section, $s$, of $\partial_{\hat{A}}'$ has the form $s = y^{-1/2} (1,0)^\dagger + \cdots$ or $s = y^{1/2} (\alpha, \beta)^\dagger + \cdots$. 

Since we are assuming to be in the non-parabolic case of Lemma \ref{lem:Hitchen}, the monodromy has two distinct invariant subspaces in $\mathbb{C}^2$. Propagating these by continuation
(i.e., parallel transport) determines two sub-line bundles, $L_1$ and $L_2$, of the trivial bundle over $\hat{\Si}$. $\hat{A}$ induces a scalar connection on each of these. From the form of covariant constant sections of $\partial_{\hat{A}}'$ described above, it follows that $L_1$ extends to a line bundle on $C$ and the restricted scalar connection in a neighbourhood of $w_i$ extends to be meromorphic with  a simple pole of residue either -1/2 or +1/2.
This has the form of a first order Fuchsian scalar differential equation. It follows from \eqref{CWOrb} that $\hat{E}$ has parabolic degree 0.
Then, by \eqref{pardeg}, one has the following possibilities:
\begin{eqnarray*}
    0 = \deg L_1  \pm  1/2 \pm 1/2 \pm 1/2 \pm 1/2.
\end{eqnarray*}

But, $L_1$ is a holomorphic sub-bundle of the trivial bundle $\hat{E} \simeq C \times \mathbb{C}^2$. Hence the coordinate functions  of $\mathbb{C}^2$ restrict globally to $L_1$ to define two independent sections of the dual bundle $L_1^*$.  
By the Riemann-Roch theorem for an elliptic curve we have
\begin{eqnarray*}
    2 \leq \dim H^0(C, L_1^*) &=& 
     \deg L_1^* = - \deg L_1
\end{eqnarray*}
and so,
\begin{eqnarray*}
     \mp1/2 \mp 1/2 \mp 1/2 \mp 1/2 \leq \deg L_1 \leq -2.
\end{eqnarray*}
It follows that the only possibility is for the residues of the scalar connection on $L_1$ to all be $1/2$. It then further follows that $\deg L_1 = -2$. The same holds for $L_2$ by this same argument.

Now, the covering involution \eqref{invol} induces a nontrivial action on the bundles $L_1$ and $L_2$; i.e., it must interchange them with their respective fibers coinciding over the Weierstrass points $w_i$. 

\subsection{Constructing Holomorphic Sections}

Matrix solutions to Fuchsian equations  of the general form of 
\eqref{FuchsEqn} have been explictly constructed by methods of Riemann-Hilbert analysis. These solutions are built in terms of the function theory on $C$. We  will not actually need the explicit form of these solutions for our bifurcation analysis; however, it will be helpful to briefly describe it in order see more clearly how the various ingredients we have described combine to yield a solution. The explicit result was derived in \cite{bib:dikz, bib:eg} whose form for our case is stated below in
\eqref{szegoSoln}.
Most of the ingredients involved in this expression have been defined in Section \ref{inducedElliptic}.  The other ingredient is Riemann's theta function with characteristics,

\begin{eqnarray} \label{theta}
    \theta \left[\begin{array}{c}
       \delta  \\ 
       \epsilon \end{array}\right] (u; \tau) &=& \sum_{n  \in \mathbb{Z}}
       \exp(\pi i \tau (n +\delta)^2 + 2\pi i (z + \epsilon)(n + \delta) 
\end{eqnarray}
where $\delta, \epsilon \in \mathbb{R}$. These latter are referred to as {\it characteristics}. This is a holomorphic  Fourier  series on $\mathbb{C}$, automorphic with respect the period lattice $\Gamma$ with multiplicative factor of automorphy determined by the characteristics.
\smallskip 

A matrix solution, $Y(z)$,  in our case has entries $(s,r =  1.2)$  of the form
\begin{eqnarray} \label{szegoSoln}
  Y_{rs}(z) &=& X_{rs}(z)   
  \frac{\theta \left[\begin{array}{c}
       \delta  \\ 
       \epsilon \end{array}\right]
    \left(\int_{P_0^{(r)}}^{P^{(s)}} \phi ; \tau\right)}{\theta \left(\int_{P_0^{(r)}}^{P^{(s)}} \phi ; \tau\right)} 
    \frac{\theta \left(0  ; \tau\right)}{\theta \left[\begin{array}{c}
       \delta  \\ 
       \epsilon \end{array}\right]
    \left(0; \tau\right)}\\ \nonumber
    &&\mbox{where}\\ \nonumber
    X_{rs}(z) &=& \frac12\left[\left( (-1)^{s-r}\sqrt{\frac{p(z) q(z_0)}{p(z_0)q(z)}}\right)^{-1/2} + \left( (-1)^{s-r}\sqrt{\frac{p(z) q(z_0)}{p(z_0)q(z)}}\right)^{1/2}\right]\\ \nonumber
    q(z) &=& z - z_2 \\ \nonumber
    p(z) &=&  (z - z_1) (z- z_3). 
\end{eqnarray}
The characteristics, $\delta, \epsilon$, are determined by the monodromy of the connection with respect to $\Gamma$. 

$X$ solves the following canonical Riemann-Hilbert problem.  

\begin{enumerate}
\item $X(z)$ is analytic in $\mathbb{S} \backslash \mathcal{L}$, where  $\mathcal{L} = [z_1, z_2] \cup [z_3, \infty]$.
\item The limits, $X_\pm(z)$, of $X(z)$ as $z \to \mathcal{L}$ fromabove/below, satisfy the jump conditions
\begin{eqnarray*}
X_-(z) = X_+(z) \left(\begin{array}{cc} 0 & 1 \\ -1 & 0 \end{array}\right).
\end{eqnarray*}  
\item The behavior of $X(z)$ as $z\to \infty$ is specified in terms of data related to meromorphic differentials on $C$ \cite{bib:dikz, bib:eg}.
\end{enumerate}

We can make use of formula  \eqref{szegoSoln} to explicitly realize line bundles described in section \ref{invariant}. $Y$ is a fundamental solution of the pull-back of \eqref{FuchsEqn} to $C$. Parallel sections are then given by  $Y (\alpha, \beta)^\dagger$ for a vector $(\alpha, \beta)^\dagger$ in the fiber of $\hat{E}$ over $z_0$. Generically, due to monodromy, these sections are not well defined, even on $C$. However, if $(\alpha, \beta)^\dagger  = e_1 := (1,0)^\dagger$ or  $ e_2 :=(0,1)^\dagger$ (the column vectors of $Y$), the section is invariant up to a scalar multiple; i.e., it determines a sub-line bundle of $\hat{E}$ on $C$. These are  respectively the line bundles described in section \ref{invariant}.
\begin{proposition}
    The respective spans of the columns, $Y_1, Y_2$ of $Y$ are invariant under the monodromy action of $\Gamma$; i.e., they determine eigenvectors of $\Gamma$. The monodromy action of $\rho(\iota)$ interchanges these eigenvectors. 
\end{proposition}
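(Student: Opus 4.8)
The plan is to read off both assertions from the explicit shape of the $\G$-monodromy and of the involution matrix supplied by Lemma~\ref{lem:Hitchen}, using only the fact that a fundamental matrix solution transforms by right multiplication under analytic continuation. The columns $Y_1, Y_2$ are the parallel transports of the fibre vectors $e_1 = (1,0)^\dagger$ and $e_2 = (0,1)^\dagger$, so continuing $Y$ around a loop $\g$ representing an element of $\pi_1(\hat\Si)\simeq\G$ replaces $Y$ by $Y\,\rho(\g)$; that is, $[\,Y_1\mid Y_2\,]\mapsto[\,Y_1\mid Y_2\,]\,\rho(\g)$.

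First I would invoke the non-parabolic case of Lemma~\ref{lem:Hitchen}, in which every $\rho(\g)$ with $\g\in\G$ lies in the diagonal one-parameter subgroup, i.e. $\rho(\g)=\mathrm{diag}(\lambda_\g,\lambda_\g^{-1})$ in the frame given by the columns of $Y$. Right multiplication by such a matrix sends $[\,Y_1\mid Y_2\,]$ to $[\,\lambda_\g Y_1\mid \lambda_\g^{-1}Y_2\,]$, so each column is merely rescaled and its span preserved. Since this holds simultaneously for all $\g\in\G$, the lines $\mathrm{span}(Y_1)$ and $\mathrm{span}(Y_2)$ are common eigenlines of the abelian group $\G$ and $Y_1,Y_2$ are its eigenvectors; these are precisely the sub-line bundles $L_1,L_2$ of Section~\ref{invariant}.

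For the interchange I would use that, by Lemma~\ref{Hitchen}, the full image $\rho(\pi)$ is the semidirect product $\G\rtimes G$ in which $\iota$ acts on the lattice by $u\mapsto -u$, hence by inversion $\g\mapsto\g^{-1}$ on $\G$; consequently $\rho(\iota)\,\mathrm{diag}(\lambda,\lambda^{-1})\,\rho(\iota)^{-1}=\mathrm{diag}(\lambda^{-1},\lambda)$, so $\rho(\iota)$ must swap the two diagonal entries. Lemma~\ref{lem:Hitchen} makes this concrete: in the non-parabolic case $\rho(\iota)$, represented by any generator $\rho_i$, has the anti-diagonal form $\left(\begin{array}{cc} 0 & a \\ -a^{-1} & 0\end{array}\right)$ (consistent with $\rho_i^2=-1$). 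Right multiplication by this matrix carries $[\,Y_1\mid Y_2\,]$ to $[\,-a^{-1}Y_2\mid a\,Y_1\,]$, sending $\mathrm{span}(Y_1)$ to $\mathrm{span}(Y_2)$ and conversely, which is the claimed interchange.

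The main obstacle is not the matrix algebra, which is immediate, but confirming that the columns of the \emph{explicit} solution \eqref{szegoSoln} really form the eigenframe in which the $\G$-monodromy is diagonal. This rests on matching the theta-characteristics $\delta,\epsilon$ to the monodromy data: under a lattice translation $\int\phi\mapsto\int\phi+\omega_k$ the quotient of the characteristic theta by the ordinary theta in \eqref{szegoSoln} acquires exactly the automorphy factor $e^{2\pi i(\cdots)}$ realizing the diagonal eigenvalue $\lambda_\g$, while the involution $u\mapsto -u$ negates each characteristic and permutes the sheet labels $P_0^{(r)},P^{(s)}$, implementing the row/column swap. I would therefore verify that the characteristics determined by the connection's monodromy produce precisely these diagonal and anti-diagonal factors of automorphy; granting that identification (asserted in the text following \eqref{szegoSoln}), the proposition reduces to the two right-multiplication computations above.
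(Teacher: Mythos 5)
Your reduction is sound as far as it goes, and your semidirect-product argument (conjugation by $\rho(\iota)$ inverts $\Gamma$, hence must swap the two eigenlines of the diagonal torus) is a genuinely nicer conceptual route to the interchange statement than a bare computation. But you have correctly located, and then not actually closed, the real issue: everything hinges on showing that the columns of the \emph{explicit} solution \eqref{szegoSoln} are the eigenframe, i.e.\ that the monodromy of this particular $Y$ is literally diagonal on $\Gamma$ and anti-diagonal on $\rho(\iota)$ with respect to the standard basis. Lemma \ref{lem:Hitchen} gives these forms only up to an unknown conjugation, so your first two paragraphs carry no force for this specific $Y$ until that identification is made; the identification \emph{is} the proposition, and it is exactly what the paper's proof computes.

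Your sketch of that identification is materially incomplete. You attribute the diagonal $\Gamma$-factors entirely to the automorphy of the theta quotient under lattice translations, and the interchange entirely to $\phi \mapsto -\phi$ together with permutation of the sheet labels. But the entries of $Y$ also contain the Riemann--Hilbert factor $X$, which is multivalued across $\mathcal{L} = [z_1,z_2] \cup [z_3,\infty]$ with the anti-diagonal jump $\left(\begin{array}{cc} 0 & 1 \\ -1 & 0 \end{array}\right)$, and the loops one must actually continue along cross these cuts: the paper takes $b$ to be a loop around $z_2, z_3$ crossing both cuts, and any loop realizing the nontrivial coset of $\rho(\iota)$ (a single $\gamma_i$) crosses a cut once. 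Each crossing changes both the branch of $X$ and the sheet labels $P^{(s)}$ in the Abel integrals, and it is only the \emph{combination} of the automorphy relations \eqref{thetaa}--\eqref{thetab} with the jump condition for $X$ that produces the net diagonal factors $(-1)^i e^{2\pi i \delta}$, $(-1)^i e^{2\pi i \epsilon}$ along $a$ and $b$, and the column interchange for $\rho(\iota)$. Theta automorphy alone cannot yield the monodromy along $b$ or along $\gamma_i$, so as written your verification step would not go through; to complete the proof you must track the jump of $X$ and the sheet exchange simultaneously with the theta factors, which is precisely how the paper's proof runs.
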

\begin{proof}
  To study the action of $\Gamma$ we make use of the following automorphy relations for  the theta function \eqref{theta}.  
  \begin{eqnarray} \label{thetaa}
      \theta \left[\begin{array}{c}
       \delta  \\ 
       \epsilon \end{array}\right] (u  + 1; \tau) &=& e^{2\pi i \delta} \theta \left[\begin{array}{c}
       \delta  \\ 
       \epsilon \end{array}\right] (u; \tau) \\ \label{thetab}
       \theta \left[\begin{array}{c}
       \delta  \\ 
       \epsilon \end{array}\right] (u  + \tau; \tau) &=& e^{- 2\pi i \epsilon} 
       e^{- 2\pi i u} e^{- 2\pi i \tau}\theta \left[\begin{array}{c}
       \delta  \\ 
       \epsilon \end{array}\right] (u; \tau).   
  \end{eqnarray}
  Taking $a$ to be a loop encircling $z_1$ and $z_2$ and $b$ a loop going around $z_2$ and $z_3$ by crossing the cuts in $\mathcal{L}$, we can assess the monodromy of $Y_1 = Y (1,0)^\dagger$ and $Y_2 = Y (0,1)^\dagger$ using 
  \eqref{thetaa}, \eqref{thetab} and the jump condition for  $X$. From this it follows that along a path homologous to $a$, $Y_i$ transports $e_i$ to $(-1)^i e^{2\pi i \delta} e_i$ while along a path homologous to $b$,  $Y_i$ transports $e_i$ to $(-1)^i e^{2\pi i \epsilon} e_i$. This establishes the first statement of the proposition. For the second, we note that the jump matrix for $X$  interchanges $Y_1$ with $Y_2$. On the other hand, the action of $\iota$ on the entries of $Y$ also has the effect of interchanging $Y_1$ and $Y_2$. In the case of the theta functions this stems from the fact that the involution sends $\phi$
  to $-\phi$ and for $X$ because the expressions within parentheses can be rewritten as polynomials in  $z$ divided by $y$.
\end{proof}

It is clear from  this Proposition that the pair of solutions to \eqref{FuchsEqn} it describes are consistent with the notion of a YMH solution on an orbifold given in Definition \ref{YMH-orb}. This relation will be even clearer when we discuss solutions to the linearization  of the YMH equations. 
\bigskip

The following proposition will be used in later sections.

\begin{proposition}
   Let $\Psi$ be any vector projection of $Y$, the fundamental matrix solution \eqref{szegoSoln}. Then $\Psi \in \mathcal{H}^1$.
\end{proposition}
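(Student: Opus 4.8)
The plan is to show that $\Psi \in \mathcal{H}^1$ by exhibiting the explicit local behavior of $\Psi$ near the Weierstrass points $w_i$ on $C$ and verifying that both $\Psi$ and its covariant derivative are square-integrable with respect to the flat metric $|du|^2$. Recall from Definition \ref{sobolev} that $\mathcal{H}^1$ is the Sobolev space of sections of $\hat{E}$ extended over $C$, so the issue is entirely local at the four branch points; away from them, $\Psi$ is a real-analytic section of a smooth bundle on a compact surface and is automatically in $\mathcal{H}^1$ on that region. First I would invoke the local analysis already carried out in Section \ref{invariant}: a covariant constant section of $\partial_{\hat{A}}'$ near $w_i$ has the form $s = y^{-1/2}(1,0)^\dagger + \cdots$ or $s = y^{1/2}(\alpha,\beta)^\dagger + \cdots$, reflecting the eigenvalues $\pm 1/2$ of the pulled-back connection. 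Since $\Psi$ is a vector projection of the fundamental solution $Y$ of the pull-back of \eqref{FuchsEqn}, it is a linear combination of such covariant constant sections, so near each $w_i$ its singular part is controlled by the slowest-decaying exponent.

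The key computation is then to check integrability in the local coordinate. Near $w_i$ the elliptic involution gives $z - z_i \sim y^2$, so $y$ behaves like the local uniformizing coordinate; in the flat coordinate $u$ on $C = \mathbb{C}/\Gamma$, $y$ vanishes to first order as $u \to u_i$ (the half period), so $y^{\pm 1/2} \sim |u - u_i|^{\pm 1/2}$. The dangerous exponent is $y^{-1/2}$, giving $|\Psi|^2 \sim |u - u_i|^{-1}$, which is integrable against the area form $du\, d\bar u$ in two real dimensions since $\int_0^1 r^{-1} \, r\, dr < \infty$. Hence $\Psi \in L^2$. For the covariant derivative, I would use the representation of $d_{\hat{A}}$ from Theorem \ref{biquard}: in the good gauge $\nabla_{\hat{A}} \Psi = \nabla \Psi + \hat{A}\Psi$ with the singular part of $\hat{A}$ of the form $i \operatorname{diag}(\hat\lambda_j^+, \hat\lambda_j^-)\, d\theta$ and a remainder $a$ satisfying $\|a/r\|_{L^2} + \|\nabla a\|_{L^2} < \infty$. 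Because $\Psi$ is covariant constant for $\partial_{\hat{A}}'$, the holomorphic-type derivative is governed by $A^{(0,1)}$, and the natural estimate is that $\nabla_{\hat{A}}\Psi$ inherits the same leading exponent $y^{-1/2}$ multiplied by the $d\theta$ part, i.e. one expects $|\nabla_{\hat{A}}\Psi| \sim |u - u_i|^{-1}$, which again is $L^2$ against the area form — alternatively one observes that the finite-energy bound $\|\nabla_{\hat A}\Psi\| < \infty$ from \eqref{est2} applies once one knows $\Psi$ arises from a finite-energy configuration. The cleaner route is to show directly that differentiating $y^{-1/2}$ produces $y^{-3/2} \cdot y' $, and since $y' \sim (u - u_i)$ from the branching $z - z_i \sim y^2$, the product again scales like $|u-u_i|^{-1/2}$, safely in $L^2$.

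The main obstacle I anticipate is bookkeeping the branching carefully: the exponents $\pm 1/2$ live naturally on $C$, but $\Psi$ is written in \eqref{szegoSoln} in terms of the coordinate $z$ on $\mathbb{S}$ and the theta quotients built from the Abel map, so one must translate the $z$-singularities of the $X$ factor (which involve $\sqrt{p(z)/q(z)}$ and fractional powers) and the theta factors (which are holomorphic and bounded near the half periods, by the automorphy relations \eqref{thetaa}–\eqref{thetab}) into the flat coordinate $u$. I would argue that the theta quotients contribute no singularity — Riemann's theta function is entire and non-vanishing in the relevant range once the characteristics are generic, so they are bounded with bounded derivatives near each $w_i$ — thereby isolating the only possible singular contribution in the algebraic factor $X_{rs}$, whose exponents are precisely the $\pm 1/2$ already identified. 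Once that reduction is made, the integrability is the elementary two-dimensional computation above. The final step is to patch the local estimates at the four $w_i$ with the smooth behavior on the compact complement, concluding $\Psi \in \mathcal{H}^1$.
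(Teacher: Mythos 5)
Your reduction to a local analysis at the Weierstrass points and your $L^2$ estimate for $\Psi$ itself are correct and coincide with the paper's argument: the worst branch $y^{-1/2}\sim|u-u_i|^{-1/2}$ gives $|\Psi|^2\sim|u-u_i|^{-1}$, which is integrable against $r\,dr\,d\theta$. The gap is in the derivative step, where you make two computational errors, each fatal to the direct-estimate strategy. First, $|u-u_i|^{-1}$ is \emph{not} in $L^2$ in two real dimensions: $\int_0^1 r^{-2}\,r\,dr=\int_0^1 r^{-1}\,dr$ diverges logarithmically, so your claim that a gradient scaling like $|u-u_i|^{-1}$ is ``again $L^2$ against the area form'' is false. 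Second, your ``cleaner route'' rests on the claim $y'\sim(u-u_i)$, which is also false: since $\phi=dz/y$ is the flat coordinate differential ($du=dz/y$) and $z-z_i\sim y^2$, one has $u-u_i\sim 2y$, so $dy/du$ tends to a \emph{nonzero constant} at $u_i$; it is $dz/du=y$ that vanishes to first order, and you have conflated the two. Hence $\frac{d}{du}\bigl(y^{-1/2}\bigr)\sim|u-u_i|^{-3/2}$, whose square integrates like $\int r^{-2}\,dr$, which diverges. The conclusion is that the ordinary gradient of the $y^{-1/2}$ branch genuinely fails to be square integrable near $w_i$, so no bookkeeping of the theta and algebraic factors can close your argument.

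This is precisely why the paper's proof does not estimate $\nabla\Psi$ directly: it uses covariant constancy. Since $\Psi$ solves $\partial'_{\hat A}\Psi=0$ and is holomorphic in the gauge of Section \ref{sec:Holo}, one has $\nabla_{\hat A}\Psi\equiv 0$, so the derivative contribution to the $\mathcal{H}^1$ norm (understood via the connection, with the local structure of $\hat A$ controlled by Theorem \ref{biquard}) vanishes identically, and only the $L^2$ bound on $\Psi$ is needed. Your parenthetical appeal to the finite-energy bound \eqref{est2} gestures at this, but as written it is circular --- it presupposes that $\Psi$ belongs to a finite-energy configuration, which is what is being established; the correct statement is simply $\|\nabla_{\hat A}\Psi\|=0$ by covariant constancy. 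A further minor point: your assertion that the theta quotients are bounded and non-singular near every $w_i$ contradicts the paper's own observation that $\theta$ vanishes at $u=\frac12+\frac{\tau}{2}$, which \emph{is} a Weierstrass point; this is harmless in the paper's proof only because the local exponents are read off from the Fuchsian ODE rather than factor-by-factor from \eqref{szegoSoln}.
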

\begin{proof}
    We first note that $Y$ is holomorphic
    away from the  Weiersrass points $w_i$. This is evident for $X$. The only other non-holomorphic point arises at the zero of $\theta$ which occurs at $u = \frac12 + \frac{\tau}{2} $, which corresponds to a Weierstrass point.  Thus it suffices to study the behavior of $\Psi$ in the vicinity of such  a point. Since, $\Psi$ is a covariant constant section, it follows from the  local behavior of such sections near the the regular singular points, as described at the start of section \ref{invariant}, that $\Psi$ is locally $L^2$ and so by the compactness of $C$ it is globally $L^2$. Combining  this observation with Theorem \ref{biquard} and the fact, again from covariance constancy, that $\|\nabla_A \Psi\|^2 = 0$ it  follows that $\nabla \Psi$ is $L^2$-bounded. 
\end{proof}

\subsection{Twisting to Positive Constant Curvature}
We now turn to adding the scalar constant curvature connection \eqref{FSConn} to $A$ on $\Si$. In the holomorphic gauge the augmented covariant derivative becomes
\begin{eqnarray*}
    \partial'_A - \frac{\bar{z} dz}{1 +|z|^2}.
\end{eqnarray*}
Making the ansatz $\tilde{\Psi}(z) = f(z)Y(z)$ for scalar section $f$, and using the fact that $\partial'_A Y =0$, the parallel transport equations reduce to the scalar ODE
\begin{eqnarray*}
  \frac{d}{dz} f = \frac{\bar{z} f}{1 + |z|^2} 
\end{eqnarray*}
whose straightforward solution is $f = c(1 + |z|^2)$. 

\begin{proposition} \label{extFuchs}
The Fuchsian equation
$$\frac{d\tilde{\Psi}}{dz} =  \left(\sum_{k  = 1}^3 \frac{\hat{A}_k}{z - z_k} +  \frac{\bar{z}}{1 + |z|^2}\right)\tilde{\Psi}(z) $$  
where the residue matrices $\hat{A}_k$ have eigenvalues $\pm 1/2$, on $C$,
has fundamental matrix solution $\tilde{\Psi}=(1 + |z|^2)Y$ where $Y$ is specified by \eqref{szegoSoln}. The scalar connection $a^{(1,0)}$ is defined in \eqref{FSConn}. Let $L$ be the holomorphic line bundle of degree 1 on $\mathbb{S}$. The function $(1+ |z|^2)$ extends to a smooth section of $L$. 
Then $\tilde{\Psi}$ may be regarded as a section  of the bundle $End(E  \otimes L)$
pulled back to the universal cover of $\Sigma$. 
\end{proposition}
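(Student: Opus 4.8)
The plan is to dispatch the three assertions of the proposition in turn: the solution formula, the identification of $(1+|z|^2)$ as a section of $L$, and the endomorphism-bundle interpretation. The first is a direct computation. Writing $\tilde\Psi = f\,Y$ with $f = 1+|z|^2$ and applying the Leibniz rule gives
\[
\frac{d\tilde\Psi}{dz} = \frac{f'}{f}\,\tilde\Psi + f\,\frac{dY}{dz},
\]
so I would combine the relation $\partial_A' Y = 0$ (equivalently $\frac{dY}{dz} = \sum_k \frac{\hat A_k}{z-z_k}Y$, the pull-back of \eqref{FuchsEqn} to $C$) with the elementary identity $f'/f = \partial_z\log(1+|z|^2) = \bar z/(1+|z|^2)$, which is exactly the scalar ODE recorded just above the proposition. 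That $\tilde\Psi$ is again a fundamental matrix solution then follows because $Y$ is one and $f$ is a nonvanishing scalar. The eigenvalue statement for $\hat A_k$ is inherited from the local analysis at the start of Section \ref{invariant}: near a Weierstrass point the branching $z-z_i\sim y^2$ doubles the residue eigenvalues $\pm 1/4$ of \eqref{FuchsEqn} on $\mathbb S$ to $\pm 1/2$ on $C$.

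For the second assertion I would recall that $L$ is the hyperplane bundle and that the connection \eqref{FSConn} is precisely its Chern connection for the Fubini--Study metric: in the tautological picture the degree $-1$ bundle carries the metric $\|(1,z)\|^2 = 1+|z|^2$, so its dual $L$ has metric coefficient $(1+|z|^2)^{-1}$ and Chern connection $(1,0)$-part $-(1+|z|^2)^{-1}\bar z\,dz$, matching \eqref{FSConn}. The content is then to show that the local expression $1+|z|^2$, read in the holomorphic frame over the affine chart $U_0 = \mathbb C$, extends across $z=\infty$ to a smooth (non-holomorphic) section. I would carry this out in the coordinate $w=1/z$, combining the transition function of $L$ with the Hermitian structure and checking that the resulting expression is bounded and $C^\infty$ near $w=0$. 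This is the step I expect to require the most care: since $1+|z|^2$ is manifestly not holomorphic, the extension must be organized through the Fubini--Study metric (equivalently the conjugate bundle $\overline L$), and the bookkeeping of the transition factors at $\infty$ is where an error would most easily creep in.

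Finally, for the third assertion I would identify the augmented operator $\partial_A' + a^{(1,0)}$ with the covariant derivative of the tensor-product connection on $E\otimes L$: its $E$-part is the holomorphized Fuchsian connection of Section \ref{sec:Holo} and its $L$-part is the Fubini--Study connection \eqref{FSConn}. By the first assertion $\tilde\Psi$ is a fundamental matrix solution of this operator, so its columns form a frame of covariant-constant (in the holomorphic, $(1,0)$, sense) sections of $E\otimes L$. A fundamental solution is exactly the matrix of parallel transport, and tensoring by the line bundle $L$ leaves $\mathrm{End}(E\otimes L)\cong\mathrm{End}(E)$ as a bundle while altering the induced connection, which is what produces the scalar factor $1+|z|^2$. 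The only obstruction to $\tilde\Psi$ being globally single-valued is the monodromy; pulling back along the universal covering $\pi:\widetilde\Sigma\to\Sigma$ trivializes $E\otimes L$ and kills this obstruction, since $\widetilde\Sigma$ is simply connected, so $\tilde\Psi$ descends to a genuine smooth section of $\mathrm{End}(\pi^*(E\otimes L))$ there. Combining this with the second assertion, which certifies that the factor $1+|z|^2$ is a legitimate $L$-valued object rather than a mere function, yields the claim.
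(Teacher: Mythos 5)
Your first paragraph coincides with the paper's entire proof: the proposition is justified in the text solely by the computation displayed immediately above it (the ansatz $\tilde\Psi = fY$, the relation $\partial'_A Y = 0$, and the scalar ODE $\frac{df}{dz} = \bar z f/(1+|z|^2)$ with solution $f = c(1+|z|^2)$), and your Leibniz-rule derivation, together with the doubling $\pm 1/4 \mapsto \pm 1/2$ of the residue eigenvalues under the branching $z - z_i \sim y^2$ from Section \ref{invariant}, is exactly that argument. The paper gives no proof at all of the remaining two assertions (the $L$-section claim and the $End(E\otimes L)$ interpretation), so your second and third paragraphs supplement the text rather than parallel it.

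Regarding the step you flagged as most dangerous: your suspicion is justified, and you should be aware that the naive execution of your plan fails. Reading $1+|z|^2$ against the holomorphic transition function of $L = \mathcal{O}(1)$ alone (multiplication by $w = 1/z$) gives, in the chart at infinity,
\begin{equation*}
w\left(1 + |1/w|^2\right) = w + \frac{1}{\bar w},
\end{equation*}
which is unbounded as $w \to 0$; so the literal claim that ``$1+|z|^2$ extends to a smooth section of the holomorphic bundle $L$'' cannot be verified by the transition-function bookkeeping you propose. What is true --- and what your Fubini--Study framing is implicitly invoking --- is that $1+|z|^2$ is the local coefficient of the FS Hermitian metric on $\mathcal{O}(-1)$, hence a smooth nowhere-vanishing section of the degree-zero smooth line bundle $L \otimes \bar{L}$, whose transition factor is $|w|^2$ and whose representative at infinity, $|w|^2\left(1 + 1/|w|^2\right) = 1 + |w|^2$, is manifestly smooth. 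Your third paragraph survives under this reading: the identification of \eqref{FSConn} as the Chern connection of $(\mathcal{O}(1), h_{FS})$ is correct, $\tilde\Psi$ is a covariant-constant frame for the $(1,0)$-part of the product connection on $E \otimes L$, and passage to the universal cover does remove the monodromy obstruction. But in writing up the second assertion you must replace the literal extension claim by its metric (conjugate-bundle) formulation, or equivalently pass to the unitary gauge; the extension over $z = \infty$ suggested by the proposition's wording, and by your proposed chart computation, does not exist.
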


\begin{remark}
    We note that the pull-back of $L$ to $C$, which we denote by $\hat{L}$, has degree 2 so that $\deg \mbox{par}\,\, (L_i \otimes \hat{L})$ is zero. 
\end{remark}

 \begin{corollary} \label{solution}
     The section $(1 + |z|^2) Y_i$, is a section of $L_i \otimes \hat{L}$. It is the unique (up to scalar multiple) solution of $(\partial''_A + a^{(1,0)})\Psi = 0$, on the $\deg \mbox{par}\,\, 2$-bundle ($\hat{E} \otimes L$) on $C$ that is holomorphic on
     $\hat{\Sigma}$ and that equals $e_i$ at  $P^{(i)}_0$.
 \end{corollary}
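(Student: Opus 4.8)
The plan is to assemble the statement from the two results already in hand: Proposition \ref{extFuchs}, which exhibits $\tilde\Psi=(1+|z|^2)Y$ as the fundamental matrix solution of the twisted Fuchsian system, and the (immediately preceding) Proposition showing that the columns $Y_1,Y_2$ of \eqref{szegoSoln} span the $\Gamma$-invariant sub-line-bundles $L_1,L_2$ of Section \ref{invariant}. First I would observe that, since $\tilde\Psi$ is a fundamental solution of the augmented covariant derivative $\partial'_A+a^{(1,0)}$ of Proposition \ref{extFuchs}, each of its columns $(1+|z|^2)Y_i$ is by construction a covariant-constant section for that twisted operator; normalizing $Y$ so that $Y_i=e_i$ at $P_0^{(i)}$ produces the required initial value. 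This already settles the existence half of the assertion.

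Next I would identify the bundle. By Proposition \ref{extFuchs} the scalar factor $(1+|z|^2)$ extends to a smooth section of the degree-one hyperplane bundle $L$ on $\mathbb{S}$, hence pulls back to a section of $\hat L$ on $C$; since $Y_i$ is a section of $L_i$, the product $(1+|z|^2)Y_i$ is a section of $L_i\otimes\hat L$, which is the first claim. The degree bookkeeping then confirms the picture: $\deg(L_i\otimes\hat L)=\deg L_i+\deg\hat L=-2+2=0$, and by the preceding Remark its parabolic degree is $0$, so $L_i\otimes\hat L$ is precisely a flat, degree-zero line bundle on the elliptic curve $C$. Holomorphy on $\hat\Sigma$ I would take verbatim from the argument of the Proposition establishing that every vector projection of $Y$ lies in $\mathcal{H}^1$: the factor $X$ is holomorphic off the cut $\mathcal{L}$, the theta-quotient in \eqref{szegoSoln} is holomorphic away from the single zero of $\theta$, which sits over a Weierstrass point, and the smooth factor $1+|z|^2$ introduces no new singularities, so that the only possible non-holomorphic points lie at the $w_i$.

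The crux is uniqueness up to a scalar, and this is where the degree lift by $\hat L$ does the real work. On the one hand, the equation is a first-order linear system restricted to a line bundle, so a parallel section is determined along any path by a single initial value; since $\hat\Sigma$ is connected, a holomorphic parallel section is fixed up to one global scalar by its value at $P_0^{(i)}$. On the other hand, one must rule out a genuinely different, non-proportional holomorphic parallel section, and here I would invoke Riemann--Roch on $C$: a nonzero holomorphic parallel section trivializes the flat degree-zero bundle $L_i\otimes\hat L$, and $\dim H^0(C,L_i\otimes\hat L)\le 1$ for a degree-zero line bundle on an elliptic curve, so the space of solutions is at most one-dimensional. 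Pinning the value to $e_i$ at $P_0^{(i)}$ then removes the remaining scalar.

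I expect the main obstacle to be the global descent across the Weierstrass points: one must verify that $(1+|z|^2)Y_i$, whose $\Gamma$-monodromy factors $(-1)^i e^{2\pi i\delta}$ and $(-1)^i e^{2\pi i\epsilon}$ were read off from the automorphy relations \eqref{thetaa}--\eqref{thetab} together with the jump condition for $X$, genuinely patches to a single object on all of $C$ (including across the $w_i$, where $\hat E$ is extended as in Definition \ref{sobolev}), and that the degree shift $-2\mapsto 0$ is exactly what converts a bundle carrying no holomorphic sections into one whose holomorphic parallel sections form a one-dimensional space. Reconciling the two viewpoints --- the ODE uniqueness and the cohomological count --- is the step that will need the most care.
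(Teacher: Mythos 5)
Your overall assembly is the one the paper intends: the paper gives no separate argument for this corollary, treating it as an immediate consequence of Proposition \ref{extFuchs} (existence of the twisted parallel solution $(1+|z|^2)Y$), the proposition identifying the columns $Y_i$ with the invariant sub-bundles $L_i$, and the Remark on $\hat L$; your existence step, your bundle identification $(1+|z|^2)Y_i \in L_i\otimes\hat L$, and your degree count $\deg(L_i\otimes\hat L)=-2+2=0$ reproduce exactly that chain, and you are also right to read the operator in the statement as the twisted operator $\partial'_A + a^{(1,0)}$ of Proposition \ref{extFuchs} (the $\partial''_A$ in the statement is a type-inconsistent slip).

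The step that does not hold up as written is your first uniqueness argument. The equation $(\partial'_A + a^{(1,0)})\Psi=0$ is \emph{not} a first-order system that can be integrated along arbitrary paths: it constrains only the $(1,0)$-derivative of $\Psi$, so if $\Psi$ solves it then so does $g\Psi$ for \emph{any} antiholomorphic scalar $g$, and prescribing the value $e_i$ at $P_0^{(i)}$ does not pin the solution down. It is precisely the holomorphy hypothesis in the statement that closes the system: if $\Psi_1,\Psi_2$ are both $\partial''$-holomorphic and $(1,0)$-parallel sections of the line bundle $L_i\otimes\hat L$ with the same nonzero value at $P_0^{(i)}$, then their ratio $g$ is a globally defined function on the connected surface $\hat\Sigma$ with $\partial g=0$ (from parallelism) \emph{and} $\bar\partial g = 0$ (from holomorphy), hence constant, hence $\equiv 1$. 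Your phrase ``holomorphic parallel section'' names the right hypotheses, but the justification you give (ODE uniqueness along paths with one initial value) would fail without making this two-equation mechanism explicit. Your second, Riemann--Roch argument has the complementary defect, which you yourself flag: $\dim H^0(C,L_i\otimes\hat L)\le 1$ bounds sections holomorphic \emph{across} the Weierstrass points, whereas the corollary assumes holomorphy only on $\hat\Sigma$; to use it you must first invoke the local behavior $s\sim y^{\pm 1/2}(\cdots)$ of covariant-constant sections near the $w_i$ (Section \ref{invariant}, via Theorem \ref{biquard}) to exclude worse singularities there. Either route can be completed, but the ratio argument is the short one, and it is the uniqueness mechanism that Corollary \ref{cor:main} subsequently relies on for the non-degeneracy of the eigenspace.
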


\section{Linearization and Spectral Theory} \label{LinSpec}

 Equations \eqref{YMH-eqs} on the orbifold were posed with respect to the fixed conformally flat metric, $|du|^2$, introduced in section \ref{homotop}.  Going forward  we will work with a scale of metrics $h_r = r |du|^2$. The rescaled YMH equations have the form

 \begin{subequations} \label{rYMH-eqs}
    \begin{align}   
     & -\Delta_{A_{}}   \Psi  =  \frac{\kappa^2}{2} (r - |\Psi|^2)\Psi,
    \end{align}   \begin{align}
      &     d_{A}^* d_{A} A = J(\Psi, A), 
\end{align}
\end{subequations}

We consider the linearization of  these equations at the branch of normal solutions,  $(\Psi = 0, A^{b_r})$ where 
$A^{b_r} = \hat{A} + a^{(1,0)}$ with $b$ defined
in \eqref{extmagflux} and $b_r = b/r$.
This yields the system
\begin{eqnarray*} \label{LinSys}
    \left(-\Delta_{A^{b_r}} - \kappa^2 r\right) \psi  &=& 0\\
    d_{A^{b_r}}^*d_{A^{b_r}} \alpha &=& 0. 
\end{eqnarray*}
We make use of the previously noted fact that, as a consequence of Theorem \ref{biquard}, the sequence \eqref{complex} is  an elliptic complex associated to symbol $\partial'$. Hence  $-\Delta_{A^{b_r}}$ is an elliptic operator acting on sections of $\hat{E} \otimes L$ over $C$ \cite{bib:wells}.
One then has the following results about this magnetic Laplacian.  

 \begin{proposition}\label{prop:lin-probl-abstr} 
  On $(C, h_r)$  one has that

\begin{enumerate} 
\item   the spectrum of $-\Delta_{A^{b_r}}$ is purely discrete;   

\item the operator $ -\Delta_{A^{b_r}}$ satisfies the inequality 
\[\lan -\Delta_{A^{b_r}}  \xi, \xi\ran \ge b_r \|\xi\|^2 \] 
for $\xi \in \mathcal{H}^1$;
\item $b_r$ is an eigenvalue 
of the operator $-\Delta_{A^{b_r}}$ with an eigenfunction $\xi$ iff $\xi$ solves the equation
   \begin{equation} \label{1st-ord-eq} \p_{A^{b_r} }'\xi=0. 
   \end{equation} 
   \end{enumerate} 
   \end{proposition}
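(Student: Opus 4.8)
The plan is to reduce all three assertions to a single Bochner--Kodaira--Weitzenb\"ok identity on sections of $\hat E\otimes L$ over $C$, namely
\[
-\Delta_{A^{b_r}} \;=\; 2\,(\partial'_{A^{b_r}})^{*}\,\partial'_{A^{b_r}} \;+\; b_r\,\mathbb{I},
\]
which rewrites the second-order magnetic Laplacian as a manifestly non-negative ``first-order squared'' operator plus the scalar shift $b_r$ produced by the constant curvature. Once this identity is in hand, parts (2) and (3) are immediate, and part (1) follows from standard elliptic theory on the compact surface $C$.

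First I would establish the identity. Decomposing $d_{A^{b_r}}=\partial'_{A^{b_r}}+\partial''_{A^{b_r}}$ with respect to the complex structure and using that the $(1,0)$ and $(0,1)$ parts of $d_{A^{b_r}}\xi$ are orthogonal in the inner product \eqref{norm}, one gets on sections
\[
-\Delta_{A^{b_r}} \;=\; (\partial'_{A^{b_r}})^{*}\partial'_{A^{b_r}} + (\partial''_{A^{b_r}})^{*}\partial''_{A^{b_r}}.
\]
The two summands are related by the K\"ahler commutation relation for the connection $A^{b_r}$: since $A^{b_r}$ has constant curvature, $F_{A^{b_r}}=-i\,b_r\,\mathbb{I}\otimes\omega$ as in \eqref{const-curv-cond}, and contracting this $(1,1)$-form against the (one-dimensional) K\"ahler form via $\Lambda\omega=1$ yields $(\partial''_{A^{b_r}})^{*}\partial''_{A^{b_r}} = (\partial'_{A^{b_r}})^{*}\partial'_{A^{b_r}} + b_r\,\mathbb{I}$. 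Substituting gives the displayed identity. The integrations by parts used here must be justified near the Weierstrass points $w_i$, where $\hat E$ is extended across the cone points; here I would invoke Theorem \ref{biquard}, which puts $A^{b_r}$ in a model form with $L^2$ curvature, together with density of sections supported away from the $w_i$ in $\mathcal H^1$, to rule out boundary contributions.

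Granting the identity, part (2) is simply $\langle -\Delta_{A^{b_r}}\xi,\xi\rangle = 2\,\|\partial'_{A^{b_r}}\xi\|^2 + b_r\|\xi\|^2 \ge b_r\|\xi\|^2$ for $\xi\in\mathcal H^1$. For part (3), if $-\Delta_{A^{b_r}}\xi=b_r\xi$ then pairing with $\xi$ forces $\|\partial'_{A^{b_r}}\xi\|^2=0$, i.e. $\partial'_{A^{b_r}}\xi=0$, which is \eqref{1st-ord-eq}; conversely, any solution of \eqref{1st-ord-eq} satisfies $-\Delta_{A^{b_r}}\xi=b_r\xi$ directly from the identity. (That the eigenspace is non-empty, so that $b_r$ genuinely lies in the spectrum, is witnessed by the explicit solutions of \eqref{1st-ord-eq} constructed in Corollary \ref{solution}.) For part (1), the remark following Theorem \ref{thm:NN} identifies \eqref{complex} as an elliptic complex with symbol $\partial'$, so $-\Delta_{A^{b_r}}$ is an elliptic, formally self-adjoint operator on the compact surface $C$; combined with the lower bound from (2) and the Rellich compact embedding $\mathcal H^1\hookrightarrow L^2$, its resolvent is compact and the spectrum is therefore purely discrete.

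The main obstacle is the rigorous justification of the Bochner--Kodaira identity and of self-adjointness and discreteness in the singular orbifold setting, rather than the algebra of the identity itself. Concretely, one must confirm that the partial integrations defining $(\partial'_{A^{b_r}})^{*}$ and $(\partial''_{A^{b_r}})^{*}$ produce no boundary terms at the conical points and that the compact embedding $\mathcal H^1\hookrightarrow L^2$ survives the extension of $\hat E$ across the $w_i$. Both reduce to the finite-total-curvature hypothesis and the local normal form of Theorem \ref{biquard}, which is exactly what was arranged in Section \ref{sec:HOrb}.
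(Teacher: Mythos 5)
Your proposal is correct and follows essentially the same route as the paper: the identity $-\Delta_{A^{b_r}} = 2(\partial'_{A^{b_r}})^{*}\partial'_{A^{b_r}} + b_r\mathbb{I}$ is exactly the Weitzenb\"ock representation \eqref{Weitz} the paper derives via the K\"ahler identity $d^*d = 2\partial^*\partial$, after which parts (2) and (3) are immediate and part (1) follows from ellipticity and compactness of $C$. Your added attention to integration by parts near the Weierstrass points via Theorem \ref{biquard} is a more careful elaboration of what the paper leaves implicit, not a different method.
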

\begin{proof}
Item 1 follows from ellipticity and the fact that $C$ is compact. For item 2 one makes use of the Weitzenbock representation 
\begin{eqnarray} \nonumber
   \left( \p_{A^{b_r} }'^*\p_{A^{b_r} }'\right) \xi &=& \frac12 \left( -\Delta_{A^{b_r}} - i*F_{A^{b_r}} \right)\xi\\ \label{Weitz}
   &=& \frac12 \left( -\Delta_{A^{b_r}} - {b_r} \right)\xi
\end{eqnarray}
which follows by direct calculation, making use of the K\"ahler identity
\begin{eqnarray*}
    d^*d = 2 \partial^* \partial
\end{eqnarray*}
\cite{bib:wells}.
    It then follows from \eqref{Weitz}  that 
    \[ \left\langle \left( -\Delta_{A^{b_r}} - {b_r} \right)\xi, \xi\right\rangle = 
    \langle  \p_{A^{b_r} }'^*\p_{A^{b_r} }' \xi, \xi \rangle = \langle  \p_{A^{b_r} }' \xi, \p_{A^{b_r} }'\xi \rangle \ge 0.\]
    Finally item 3 is a direct consequence of 
    the last equation.
  \end{proof} 
  \noindent Taking $\xi$ to be the pullback of the section-pair $(1+|z|^2) Y_i$ specified in Corollary \ref{solution} we see that
  \begin{corollary} \label{cor:main}
      $b_r$ is an eigenvalue 
of $-\Delta_{A^{b_r}}$  with non-degenerate (i.e., one-dimensional) eigenspace.
  \end{corollary}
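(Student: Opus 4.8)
The plan is to combine the two nontrivial parts of Proposition \ref{prop:lin-probl-abstr}: part (3) characterizes the $b_r$-eigenspace as the kernel of the first-order operator $\p'_{A^{b_r}}$, and Corollary \ref{solution} exhibits explicit solutions of exactly this equation. So the strategy is first to show that the sections produced in Corollary \ref{solution} genuinely lie in the space on which $-\Delta_{A^{b_r}}$ acts, and second to show that the kernel of $\p'_{A^{b_r}}$ is precisely one-dimensional as a bona fide YMH-orbifold eigenspace. The eigenvalue claim is then immediate from Proposition \ref{prop:lin-probl-abstr}(3).

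First I would take $\xi$ to be the pullback of $(1+|z|^2)Y_i$ from Corollary \ref{solution}, which by construction satisfies $\p'_{A^{b_r}}\xi = 0$ (this is the twisted Fuchsian equation of Proposition \ref{extFuchs} with the Fubini–Study term $a^{(1,0)}$ included). By the earlier proposition showing any vector projection of $Y$ lies in $\mathcal{H}^1$, together with the fact that the smooth section $(1+|z|^2)$ of $L$ is bounded on the compact surface $C$, the twisted section $\xi$ is again in $\mathcal{H}^1$. Hence $\xi$ is an admissible test section, and Proposition \ref{prop:lin-probl-abstr}(3) applies verbatim to conclude $-\Delta_{A^{b_r}}\xi = b_r\xi$, establishing that $b_r$ is an eigenvalue. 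This part is essentially bookkeeping, checking that the explicit solution has the required regularity and genuinely solves \eqref{1st-ord-eq}.

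The substantive step is non-degeneracy: I must show $\dim\ker\p'_{A^{b_r}} = 1$. The key geometric input is that $\p'_{A^{b_r}}$ is the covariant $\partial'$-operator on the bundle $\hat{E}\otimes L$, whose relevant sub-line bundles are the $L_i\otimes\hat{L}$ of parabolic degree $0$ described in Section \ref{invariant}. A holomorphic (in the appropriate twisted sense) section in the kernel splits along the two invariant sub-line bundles, and on each factor the equation reduces to a scalar first-order Fuchsian problem. I would argue that on a degree-considerations basis — using that $L_i\otimes\hat{L}$ has parabolic degree zero and the Riemann–Roch / vanishing count already invoked in Section \ref{invariant} — each sub-line bundle supports at most a one-dimensional space of such sections, and that the equivariance requirement under the elliptic involution $\iota$ (which interchanges $L_1$ and $L_2$, as recorded in the Proposition on invariant sub-bundles) collapses the two into a single admissible one-dimensional eigenspace. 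Concretely, the two columns $Y_1,Y_2$ are exchanged by $\rho(\iota)$, so a genuine orbifold solution equivariant with respect to $\iota$ in the sense of Definition \ref{YMH-orb} corresponds to one invariant line rather than two independent ones.

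The hard part will be the uniqueness/non-degeneracy argument rather than the existence half. I expect the main obstacle to be ruling out additional kernel elements that are not captured by the explicit $Y_i$: one must verify that no other holomorphic sub-line bundle of $\hat{E}\otimes L$ of the correct parabolic degree can support a covariant-constant section, and that the local behavior at the Weierstrass points (the $\pm 1/2$ residues, and the $y^{\pm 1/2}$ local forms from Section \ref{invariant}) forces exactly the two sub-bundles $L_1\otimes\hat{L}$, $L_2\otimes\hat{L}$ and no more. Once the splitting is pinned down by these degree and residue constraints, the $\iota$-equivariance constraint does the final reduction from two to one, and non-degeneracy follows.
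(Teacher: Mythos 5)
Your proposal is correct and takes essentially the same route as the paper: the paper's own proof is the single sentence preceding the corollary, which takes $\xi$ to be the pullback of the section-pair $(1+|z|^2)Y_i$ of Corollary \ref{solution} and combines its uniqueness statement with Proposition \ref{prop:lin-probl-abstr}(3) — exactly your existence step plus your non-degeneracy step, since the invariant sub-bundle/degree analysis and the $\iota$-equivariant pairing of $Y_1$ with $Y_2$ are precisely the content of Section \ref{invariant} and Corollary \ref{solution} that the paper implicitly invokes. Your write-up simply unfolds those dependencies (regularity via the $\mathcal{H}^1$ proposition, monodromy-invariant lines, equivariance collapsing two lines to one) in more explicit detail than the paper does.
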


\section{Existence and Uniqueness of YMH solutions} \label{Existence}

We are now going to establish the existence of solutions to YMH on our orbifold $\Si$ with non-trivial $\Psi$ by outlining a bifurcation analysis in the vicinity of a normal solution. In this we follow the approach taken in \cite{bib:cers, bib:esz}.

Define a nonlinear map
\begin{eqnarray}  \label{Func}
    \mathcal{F} &:& \mathcal{H}^s \times \vec{\mathcal{H}}^s \times \mathbb{R} \to \mathcal{H}^{s-2} \times \vec{\mathcal{H}}^{s-2} \\ \nonumber
    && (\Psi, a, r) \mapsto \left( -\Delta_{A^{b_r} + a} \Psi + \kappa^2 (|\Psi|^2 - r)\Psi ,P( d_{A^{b_r}}^*d_{A^{b_r}} a -  J(\Psi, a))\right)
\end{eqnarray}
where  $\mathcal{H}^s$ and $\vec{\mathcal{H}}^s$ are as in Definition \ref{sobolev}, 
\begin{eqnarray*}
 J(\Psi, a) &:=& \mbox{Im} \left(\bar{\Psi} \nabla_{A^{b_r} + a} \Psi\right) \mbox{and}\\
 P: \vec{\mathcal{H}}^{s-2} &\to& \vec{\mathcal{H}}^{s-2} 
\end{eqnarray*}
is the projection onto the space of $d_{A^{b_r}}$ - co-closed matrix valued 1-forms. 

We first consder the situation in which we are working on a  smooth, closed Riemann surface in which case we may take $s \geq 2$ so that everything is manifestly well-defined. The  projection $P$, which maps to a space that is orthogonal to the subspace of gauge symmetry null modes, is introduced to ensure one has sufficient coercivity to guarantee existence of minimizers. But one also has the following result which is a direct extension of a result from the Ginzburg-Landau case.
\begin{proposition} \cite{bib:cers}
    Assume $(\Psi, A^{b_r})$ solves \eqref{YMH-eqs-psi}. Then $(d_{A^{b_r}}^*d_{A^{b_r}} a - J(\Psi, a))$  is a 
    $d_{A^{b_r}}$ - co-closed one-form (i.e., $d_{A^{b_r}}^* (d_{A^{b_r}}^*d_{A^{b_r}} a - J(\Psi, a))  = 0$).
\end{proposition}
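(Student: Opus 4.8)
The plan is to apply $d_{A^{b_r}}^*$ to the two natural summands of the expression separately. Writing $D := d_{A^{b_r}}$ for brevity, I must show that $D^*(D^*Da)=0$ and $D^*J(\Psi,a)=0$. The first vanishing is a purely algebraic (curvature) fact that uses only that the background $A^{b_r}$ has \emph{constant} curvature, while the second is a current-conservation law that is forced by the matter equation \eqref{YMH-eqs-psi}; together they give $D^*\big(D^*Da-J(\Psi,a)\big)=0$.

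First I would dispose of the term $D^*(D^*Da)$. Testing against an arbitrary adjoint-valued $0$-form $\chi$ and integrating by parts twice,
\begin{eqnarray*}
\langle D^*(D^*Da),\chi\rangle = \langle D^*Da, D\chi\rangle = \langle Da, DD\chi\rangle = \langle Da, [F_{A^{b_r}},\chi]\rangle,
\end{eqnarray*}
using the standard identity $DD\chi=[F_{A^{b_r}},\chi]$. Here the constant-curvature hypothesis is decisive: since $F_{A^{b_r}}=-i b_r\,\mathbb{I}\otimes\omega$ is central, $[F_{A^{b_r}},\chi]=0$ for every $\chi$, so the pairing vanishes identically and $D^*(D^*Da)=0$. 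This is precisely the point at which the non-abelian problem departs from the Ginzburg--Landau case: there $D^2=0$ automatically, whereas here the vanishing genuinely requires the background curvature to be a scalar multiple of the identity.

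Next I would establish the conservation law $D^*J(\Psi,a)=0$ via the second Noether identity attached to the gauge invariance of $\mathbb{E}$. Differentiating $\mathbb{E}$ along the infinitesimal gauge orbit $\delta\Psi=\chi\Psi,\ \delta A=d_A\chi$ and using invariance yields, for every adjoint-valued $\chi$,
\begin{eqnarray*}
\langle \mathcal{E}_\Psi,\chi\Psi\rangle + \langle \mathcal{E}_A, d_A\chi\rangle = 0,
\end{eqnarray*}
where $\mathcal{E}_\Psi$ and $\mathcal{E}_A$ are the two Euler--Lagrange expressions. Integrating the second pairing by parts moves $d_A$ onto $\mathcal{E}_A$, so that $D^*\mathcal{E}_A$ is identified with a bilinear moment-map pairing of $\Psi$ with $\mathcal{E}_\Psi$ (the skew-Hermitian part of $\Psi\otimes\overline{\mathcal{E}_\Psi}$). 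Invoking the hypothesis that $(\Psi,A^{b_r})$ solves \eqref{YMH-eqs-psi}, i.e. $\mathcal{E}_\Psi=0$, makes this pairing vanish; combined with the first step this gives $D^*J(\Psi,a)=0$, and hence the co-closedness claimed.

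I expect the main obstacle to be technical rather than conceptual, on two fronts. First, the bookkeeping of the new commutator terms in the integration by parts: one must verify that the pieces quadratic in $\nabla_{A^{b_r}}\Psi$ assemble into a Hermitian object whose imaginary part drops out, leaving only the term into which \eqref{YMH-eqs-psi} can be substituted. Second, all of these manipulations must be legitimate in the distributional orbifold setting of Proposition \ref{distribution}; the integrations by parts are valid only because Theorem \ref{biquard} and the finite-energy bounds place $\Psi\in\mathcal{H}^1\cap L^4$ and $a\in\vec{\mathcal{H}}^1$, with no boundary contributions accumulating at the Weierstrass points $w_i$. I would therefore carry out the Noether computation first on Schwarz-class test sections supported away from the $w_i$, and then pass to the limit using exactly those bounds.
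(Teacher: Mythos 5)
Your two-part skeleton is the right one, and it is worth noting up front that the paper itself contains no proof of this proposition: it is imported by citation from the abelian Ginzburg--Landau paper \cite{bib:cers}, so the implied argument is precisely the GL computation transplanted to the present setting, i.e.\ exactly your decomposition (Maxwell term automatically co-closed; current conserved once the matter equation holds). Your step 1 is correct as written: $\langle D^*(D^*Da),\chi\rangle=\langle Da,DD\chi\rangle=\langle Da,[F_{A^{b_r}},\chi]\rangle=0$ because the background curvature $F_{A^{b_r}}=-ib_r\mathbb{I}\otimes\omega$ is central, and your remark that this is where the non-abelian case first departs from GL is apt.

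Step 2, however, has a genuine gap, and it is a mismatch of connections rather than a missing estimate. The Noether identity is an identity at a single connection: varying $\mathbb{E}$ along the gauge orbit of $(\Psi,A)$ with $A=A^{b_r}+a$ gives $\langle\mathcal{E}_\Psi,\chi\Psi\rangle+\langle d_A^*\mathcal{E}_A,\chi\rangle=0$, where the co-differential produced by the integration by parts is $d_{A^{b_r}+a}^*$, not your $D^*=d_{A^{b_r}}^*$, and $\mathcal{E}_A=d_A^*d_A A-J(\Psi,a)$, not $D^*Da-J(\Psi,a)$; likewise the $\mathcal{E}_\Psi$ appearing there is the matter Euler--Lagrange expression at $A^{b_r}+a$, whereas the hypothesis you invoke is the equation \eqref{YMH-eqs-psi} at $A^{b_r}$. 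So, granting all your analytic points, what your argument delivers is $d_{A^{b_r}+a}^*\bigl(d_{A^{b_r}+a}^*F_{A^{b_r}+a}-J(\Psi,a)\bigr)=0$, which is not the asserted identity $d_{A^{b_r}}^*\bigl(d_{A^{b_r}}^*d_{A^{b_r}}a-J(\Psi,a)\bigr)=0$; and it is the latter, background version that the paper actually needs, because the projection $P$ in \eqref{Func} is onto $d_{A^{b_r}}$-co-closed forms. In the abelian setting of \cite{bib:cers} the two statements coincide --- $d^*$ is connection-independent and $d^*d(A^{b_r}+a)=d^*da$ --- which is why the cited proof closes there. Here they differ by commutator terms: $*[a\wedge *(\cdot)]$ corrections to the co-differential, $[a,\cdot]$ corrections inside the Maxwell term, and the piece of the current involving $\operatorname{Im}(\bar\Psi\otimes a\Psi)$, none of which your proposal shows to vanish (and in general they do not). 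So the genuinely non-abelian difficulty of this proposition lives in step 2, not only in step 1 as you suggest; a complete proof must either track and cancel these discrepancies explicitly, or first restate the proposition with all operators taken at one and the same connection and then re-derive the compatibility with $P$ that the subsequent Corollary requires.
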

\noindent Hence one has that

\begin{corollary} \cite{bib:cers}
A solution $(\Psi, a, r)$  to $\mathcal{F}=0$ also solves \eqref{rYMH-eqs}.
\end{corollary}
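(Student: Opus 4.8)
The plan is to recover the two equations of \eqref{rYMH-eqs} from the two components of $\mathcal{F}(\Psi,a,r)=0$, writing $A = A^{b_r}+a$ for the full connection. The first component costs nothing; the entire content of the statement is that the projection $P$ may be stripped from the second component, and this is exactly what the preceding Proposition provides.

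First I would unwind the definition of $\mathcal{F}$: the vanishing of its first component reads $-\Delta_{A}\Psi + \kappa^2(|\Psi|^2-r)\Psi = 0$, which (up to the normalization of the coupling $\kappa$) is the rescaled Higgs equation, the first line of \eqref{rYMH-eqs}. In particular the Higgs equation \eqref{YMH-eqs-psi} now holds for the pair at hand, so the hypothesis of the preceding Proposition is in force.

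Next, the Proposition asserts that the one-form $G := d_{A^{b_r}}^* d_{A^{b_r}} a - J(\Psi,a)$ is $d_{A^{b_r}}$-co-closed, $d_{A^{b_r}}^* G = 0$. By construction $P$ is a projection of $\vec{\mathcal{H}}^{s-2}$ whose range is the space of $d_{A^{b_r}}$-co-closed one-forms; since any projection fixes the vectors of its range, and $G$ lies in that range, we get $PG = G$. The vanishing of the second component of $\mathcal{F}$ says $PG = 0$, so with $PG = G$ we conclude $G = 0$, i.e. $d_{A^{b_r}}^* d_{A^{b_r}} a = J(\Psi,a)$.

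It remains to identify $G = 0$ with the Yang--Mills equation $d_A^* d_A A = J(\Psi, A)$, the second line of \eqref{rYMH-eqs}. The current matches by definition, $J(\Psi,a) = J(\Psi, A)$, and the background $A^{b_r}$ is a constant-curvature, hence Yang--Mills, connection (cf. the Normal Solutions discussion), so $d_{A^{b_r}}^* F_{A^{b_r}} = 0$ and its contribution drops out when $F_A = d_A A$ is expanded about $A^{b_r}$. I expect this identification to be the only step requiring genuine care: in the non-abelian setting one must check that the bracket corrections produced by expanding $d_A^* d_A A$ about the background reorganize so that the reduced identity $G = 0$ is equivalent to the full equation, the direct analogue of the abelian Ginzburg--Landau computation of \cite{bib:cers}. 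Modulo this, the corollary follows formally from the Proposition.
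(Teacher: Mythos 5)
Your proof is correct and takes essentially the same route as the paper: the paper's entire argument for this corollary is the word ``Hence'' following the Proposition, and what it intends is exactly your projection-stripping step (the Proposition puts $G := d_{A^{b_r}}^*d_{A^{b_r}} a - J(\Psi,a)$ in the range of $P$, so $PG=G$ together with $PG=0$ forces $G=0$). Your closing caveat about reconciling $G=0$ with the full non-abelian equation $d_A^*d_A A = J(\Psi,A)$ flags a step the paper glosses over entirely (it is automatic only in the abelian setting of \cite{bib:cers}), so raising it is a merit of your write-up rather than a defect.
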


In the case of interest to us, $s = 1$ for which the image of $\mathcal{F}$ lands in 
$\mathcal{H}^{-1} \times \vec{\mathcal{H}}^{-1}$. This is to be understood as a space of tempered distributions on $\mathcal{H}^{1} \times \vec{\mathcal{H}}^{1}$.
But this is precisely the sense in which the equations \eqref{rYMH-eqs} are to be understood as described in Proposition \ref{distribution}. Implementing the coercivity condition in this generalization means that one replaces the second component of $\mathcal{F}=0$ by the requirement that
\begin{eqnarray*}
    (d_{A^{b_r}} a, d_{A^{b_r}} \alpha) - (J(\Psi, a), \alpha) = 0
\end{eqnarray*}
for all Schwarz class $\alpha$ in the orthogonal complement of $d_{A^{b_r}}$-closed one-forms.

\subsection{Lyapunov-Schmidt Bifurcation Analysis}
We now consider the linearization  of $\mathcal{F}$ along the trivial branch of normal solutions $(\Psi, \alpha, r)= (0,0,r)$
\begin{eqnarray} \label{Lin}
    dF(0,0,r) &=& \diag \left( -\Delta_{A^{b_r}} - \kappa^2 r, d_{A^{b_r}}^* \circ  d_{A^{b_r}}\right): X^s \to X^{s-2}.
\end{eqnarray}
Defining $N_r := \mbox{Null} \,\,dF(0,0,r)$, one sees from \eqref{Lin} that 
\begin{eqnarray*}
    N &\equiv& N_{b/\kappa^2} = K \times \Omega\\
    K &=& \mbox{Null}\,\, (-\Delta_{A^{b_r}} - b)\\
    \Omega &=& \mbox{Null} \,\,  d_{A^{b_r}}^* \circ  d_{A^{b_r}} = \left\{   d_{A^{b_r}} - \mbox{harmonic 1-forms on $C$}\right\}.
\end{eqnarray*}
When $s = 1$, this is understood in the weak sense described earlier. Note that, by ellipticity, weakly harmonic 1-forms are harmonic. 

\begin{theorem}
    Let $(\Sigma, h_r), r>0$ be an orbifold sphere with finite area metric induced by   $r |du|^2$ on $C$. Let $E \to \Sigma$  be a rank 2 unitary vector bundle with connection of constant curvature $b = \frac{2 \pi \deg {par} E}{2 |\Sigma| } > 0$ where $|\Sigma|$ denotes the total area of $\Sigma$ when $r  = 1$. Suppose that $r$ satisfies $0 <  |\kappa^2  r -  b| << 1$ and $(\kappa - \sqrt{b/r})(\kappa - \kappa_c(r))  > 0$ where $\kappa_c(r) := \sqrt{\frac12 \left( 1 - \frac1{\beta(r)}\right)}$ and 
    $\beta(r):= \min \{ \langle  |\xi|^4\rangle: \xi \in K, \langle  |\xi|^2\rangle = 1\}$. $(\langle f \rangle = \frac1{{r|\Sigma|}}\int  f) $. 
    
    Then, for each  $r$ satisfying these conditions, there exists a solution $(\Psi(r), A(r))$, in a neighborhood $U \subset \mathcal{H}^{-1} \times \vec{\mathcal{H}}^{-1}$ of  $(0, A^{b_r}$$ )$ satisfying the scaled YMH equations \eqref{rYMH-eqs} in the sense of Definition \ref{YMH-orb}. This solution  is unique in $U$ up to gauge symmetry. 
\end{theorem}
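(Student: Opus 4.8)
The plan is to set up a Lyapunov-Schmidt reduction applied to the nonlinear map $\mathcal{F}$ from \eqref{Func}, bifurcating off the normal branch $(0,0,r)$. First I would fix the bifurcation point by the resonance condition $\kappa^2 r = b$, so that by Corollary \ref{cor:main} the operator $-\Delta_{A^{b_r}} - \kappa^2 r$ has a one-dimensional kernel $K$ spanned by the pullback of $(1+|z|^2)Y_i$; the second block $d_{A^{b_r}}^* d_{A^{b_r}}$ has kernel $\Omega$ consisting of $d_{A^{b_r}}$-harmonic one-forms. The key structural input is Proposition \ref{prop:lin-probl-abstr}(2), the coercivity bound $\langle -\Delta_{A^{b_r}}\xi,\xi\rangle \ge b_r\|\xi\|^2$, together with the Weitzenb\"ock identity \eqref{Weitz}, which shows the linearization is bounded below and that the gauge-fixing projection $P$ restores coercivity on the orthogonal complement of the null modes. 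I would then project $\mathcal{F}=0$ onto $N$ (the range of the spectral projection onto $K\times\Omega$) and its complement.

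Next I would solve the infinite-dimensional (complementary) part. On the orthogonal complement of $N$ the linearization $dF(0,0,r)$ is invertible with bounded inverse, uniformly for $r$ near $b/\kappa^2$; here the ellipticity and discreteness from Proposition \ref{prop:lin-probl-abstr}(1), plus the role of $P$ in projecting away gauge null modes, are what make the implicit function theorem applicable. This yields a smooth map expressing the complementary components of $(\Psi,a)$ as functions of the $K\times\Omega$-component and of $r$, and reduces the problem to the finite-dimensional bifurcation equation on $N$. The equivariance requirement of Definition \ref{YMH-orb} must be propagated through this construction: since $A^{b_r}$ and the generating kernel element are $\iota$-equivariant (by the Proposition on the columns $Y_1,Y_2$ being interchanged by $\rho(\iota)$) and $\mathcal{F}$ commutes with the $\iota$-action, the IFT solution stays in the equivariant subspace, so the reduced solution automatically satisfies the orbifold constraint.

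Then I would analyze the reduced equation. Writing the $K$-component as $t\xi$ with $\xi$ a unit-norm generator and expanding to cubic order, the Higgs equation contributes a term of the form $\kappa^2 r\, t - \kappa^2 t^3\langle|\xi|^4\rangle$ after using $\langle|\xi|^2\rangle=1$, while the gauge-field block contributes at higher order through the current $J$; the bifurcation equation reduces, after dividing out the trivial branch $t=0$, to a scalar relation whose nontrivial root exists precisely when the sign condition $(\kappa-\sqrt{b/r})(\kappa-\kappa_c(r))>0$ holds, with $\kappa_c(r)=\sqrt{\tfrac12(1-1/\beta(r))}$ emerging from the competition between the linear detuning $\kappa^2 r - b$ and the cubic coefficient governed by $\beta(r)=\min\langle|\xi|^4\rangle$. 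Nondegeneracy of $\beta(r)>0$ guarantees a simple root, hence a unique bifurcating branch up to the residual gauge and scaling freedom. Finally I would record that the constructed $(\Psi(r),A(r))$ lies in the claimed neighborhood $U$ and solves \eqref{rYMH-eqs} in the distributional sense of Proposition \ref{distribution}, and that uniqueness in $U$ modulo gauge follows from the uniqueness clause of the implicit function theorem on the slice transverse to the gauge orbit.

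I expect the main obstacle to be the low-regularity setting $s=1$, where the image lands in $\mathcal{H}^{-1}\times\vec{\mathcal{H}}^{-1}$ and the equations hold only distributionally. The standard implicit-function/Lyapunov-Schmidt machinery requires smoothness and Fredholmness of $\mathcal{F}$ between the relevant Sobolev spaces, but the nonlinearities $|\Psi|^2\Psi$ and $J(\Psi,a)$ involve products that are delicate at this regularity; one must verify boundedness of these maps and $C^1$-dependence using the embedding $\mathcal{H}^1\hookrightarrow L^4$ from Proposition \ref{distribution}, Cauchy-Schwarz and H\"older estimates on the compact surface $C$, and the control of $A^{b_r}$ afforded by Theorem \ref{biquard} near the Weierstrass points. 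Handling the conical singularities within this weak framework — ensuring the spectral projection, the gauge-fixing $P$, and the reduced equation all remain well-defined and uniform in $r$ despite the non-smoothness of the orbifold metric — is where the argument demands the most care.
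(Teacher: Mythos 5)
Your proposal follows essentially the same route as the paper: a Lyapunov-Schmidt splitting of $\mathcal{F}=0$ via the projections $Q$, $Q^\perp$ onto $N = K\times\Omega$ and its complement, the implicit function theorem to solve the complementary equation $w = w(v,r)$ for $|\kappa^2 r - b|$ small, and reduction to a finite-dimensional bifurcation equation using $\dim K = 1$ (Corollary \ref{cor:main}) and the finite-dimensionality of the harmonic space $\Omega$. The explicit cubic analysis of the reduced equation that you sketch (yielding $\kappa_c(r)$ and $\beta(r)$) is exactly the part the paper defers to \cite{bib:cers, bib:esz}, so the two arguments coincide in substance.
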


\begin{proof}
  The goal is to show that a non-trivial branch of solutions bifurcates from $N$ if $0 < |\kappa^2 r - b| << 1.$
Let $Q$ denote projection within $\vec{\mathcal{H}}^{-1}$ onto $N$ and $Q^\perp =1 - Q$ projection onto $N^\perp$. Let $u$ denote a pair $(\Psi, \alpha)$ in $\vec{\mathcal{H}}^{-1}$. Set $v = Qu$ and $w = Q^\perp u$. Split $\mathcal{F}=0$ into 
\begin{eqnarray} \label{Q-eq}
   Q\mathcal{F}(v+w, r) &=& 0\\ \label{QPerp-eq}
   Q^\perp \mathcal{F}(v+w,r)  &=& 0. 
\end{eqnarray}

Since $d_w Q^\perp|_{w=0}$ is invertible on Ran $Q^\perp$ it follows that for $(v,r)$ in $N \times \mathbb{R}_{\geq 0}$ and $|\kappa^2 r - b|$ both sufficiently small, the implicit function theorem provides a unique solution $w(v,r)$ to \eqref{QPerp-eq}. 
This reduces the problem to solving the bifurcation equation $Q\mathcal{F}(v + w(v,r), r) = 0$ for $v$. We saw in Section \ref{sec:Split} that $K$ is one dimensional. By ellipticity $\Omega$, the space of $d_{\hat{A}^b} $ -harmonic 1-forms on $C$, is finite dimensional. Hence, choosing bases, this bifurcation equation becomes a finite dimensional system of ODEs. Establishing the solvability of this system then proceeds exactly as was done in \cite{bib:cers, bib:esz}. 

\end{proof}
\section{Conclusions}

As we stated at the outset,  the purpose of this article was to provide a proof of the concept that the methods used to establish the existence of a superconducting lattice states within the Ginzburg-Landau model could be extended to finding non-trivial Higgs field states on Riemann surfaces within the non-abelian Yang-Mills-Higgs models. This was carried out for the special case when the surface is a spherical orbifold with four conical singularities. It is natural to extend this analysis to spherical orbifolds with $n > 4$ conical singularities. The main difference  here is that uniformization will now take place on the upper half plane $\mathbb{H}$ with respect to a conformally flat metric but of constant curvature -1. However extending this to surfaces of genus $>0$ (even in the smooth, non-orbifold, case) will present challenges for the construction of holomorphic sections for bundles of rank $>1$ due to the increased complexity of the monodromy representations.  

In another direction, the stability analysis for the orbifold Higgs field found here should be tractable. In \cite{bib:cers}, Corllary 1.3, we stated precise conditions, in terms of $b_r$ for the stability of non-normal solutions to GLE found there.  It is reasonable to expect a similar result for the case considered here. One may also consider stability with respect to location of the conical singularities (equivalently with respect to the lattice $\Gamma$). This is the analogue of the original Abrikosov question to determined which lattice geometry is energetically preferred.  It would  be natural to attack this by first varying within the class of isomonodromic connections which is directly related to the Painlev\'e VI equation.
We leave this for future consideration.


\begin{thebibliography}{}

\bibitem[Abri]{bib:abri}
A.A. Abrikosov, {\em On the magnetic properties of superconductors of the second group}, J. Explt.  Theoret. Phys. (USSR) {\bf 32}, 1147 - 1182, (1957).

\bibitem[AL]{bib:al}
D. Arinkin and S. Lysenko, {\em On the moduli of $SL(2)$-bundles with Connections on $P^1\backslash \{ x_1, \dots x_4\}$} {\bf  19}, 983 - 999, (1997). 

\bibitem[BGT]{bib:bgt}
E. Barany, M. Golubitsky, and J. Turski, {\em Bifurcations with local gauge symmetries in the Ginzburg-Landau equations}, Phys. D {\bf 56}, 36–56, (1992).

\bibitem[Bau]{bib:bau}
S. Bauer, {\em Parabolic bundles, elliptic surfaces and $SU(2)$-representation spaces of genus zero Fuchsian groups}, Math. Ann. {\bf 29},  509-526, (1991) .

\bibitem[Biq]{bib:biq}
O. Biquard, {\em Fibr\'es paraboliques stable et connexions singuli\`eres plates},  
{\em Bull. Soc.math. France {\bf 119}}, 231 - 257, (1991).

\bibitem[Biq1]{bib:biq1}
O. Biquard, {\em Prolongement d'un fibre holomorphe hermitien a courbure $L^p$  sur une courbe ouverte}, International Journal of Mathematrics {\bf 3}, 441 - 453, (1992)

\bibitem[Both]{bib:both}
T. Bothner, {\em On the origins of Riemann -Hilbert problems in mathematics}, Nonlinearity {\bf 34}, R1, (2021). 

\bibitem[CERS]{bib:cers} 
D.Chouchkov,  N.M.Ercolani, S.Rayan and I.M. Sigal, {\em Ginzburg-Landau equations on  Riemann surfaces of higher genus},  Ann. I. H. Poincar\'e - AN {\bf 37}, 79 - 103, (2020).

\bibitem[CJRF]{bib:cjrf} D. Cribier, B. Jacrot, L.M. Rao and B. Farnaux, {\em Mise en evidence par diffraction de neutrons d'une structure periodique du champ magnetique dans le niobium superconducteur}, Phys Lett  {\bf 9}, 106,  (1964).

\bibitem[DIKZ]{bib:dikz}
P. A. Deift, A. R. Its, A. Kapaev, and X. Zhou, {\em On the algebro-geometric integration of the Schlesinger equations}, Commun. Math. Phys. {\bf 203}, 613–633, (1999).

\bibitem[EG]{bib:eg}
V.Z. Enolski and T. Grava, {\em Singular $Z_N$-curves and the Riemann-Hilbert problem}, IMRN {\bf 32},  1619 - 1683, (2004).

\bibitem[ESZ]{bib:esz}
N.M. Ercolani, I.M.Sigal and J. Zhang, {\em Ginzburg-Landau equations on  non-compact Riemann surfaces}, Journal of Functional Analysis {\bf 285}, 110074, (2023).

\bibitem[ET]{bib:et} U. Essmann and H. Tr\"auble, {\em  The direct observation of individual flux lines in type II Superconductors}, Phys. Lett. A {\bf 24}, 526 - 527, (1967).

\bibitem[FK]{bib:fk}
H. Farkas and I. Kra, {\em Riemann Surfaces} Graduate Texts in Mathematics {\bf 71},
Springer-Verlag,Berlin, Heidelberg, New York (1980)
\bibitem[Feyn]{bib:feyn}
R.P. Feynman, R. B. Leighton and M. Sands  {\em The Feynman Lectuures on Physics}, Volume III, Pearson, (2012).

\bibitem[Hitch]{bib:hitch}
N.J. Hitchen, {\em Twistor spaces, Einstein metrics and  isomonodromic deformations},
J. Differential Geometry {\bf 42}, 30 -  112  (1995).

\bibitem[Koby]{bib:koby}
S. Kobayashi,  {\em Differential Geometry of Complex Vector Bundles}, Publications of the Mathematical Society of Japan {\bf 15}  Princeton  University Press and Iwanami Shoten  (1987).  

\bibitem[Kor]{bib:kor}
D. Korotkin, {\em Solution of matrix Riemann-Hilbert problems with quasi-permutation monodromy matrices}, Mathematische Annalen {\bf 329}, 335 - 364, (2004).

\bibitem[Lor]{bib:lor}
F. Loray, {\em Introduction to moduli spaces of connections: some explicit constructions},  {\em hal-01133705} (2015).

\bibitem[MS]{bib:ms}
V. B. Mehta and C. S. Seshadri, {\em Moduli of vector bundles on
curves with parabolic structures}, Math. Ann. {\bf 248}, 205-239, (1980).

\bibitem[NS]{bib:ns}
B. Nasatyr and B. Steer, {\em  Orbifold Riemann surfaces and the  Yang-Mills-Higgs equations}, Ann. S. Norm. Sup. Pisa Ce. Sci. (4)  {\bf 22}, 596-643, (1995).

\bibitem[Odeh]{bib:odeh}
F. Odeh, {\em Existence and bifurcation theorems for the Ginzburg-Landau equations},
J. Math. Phys. {\bf 8}, 2351–2356 (1967).

\bibitem[Sesh]{bib:sesh}
C. S. Seshadri, {\em Moduli of vector bundles with parabolic structures},
Bull. Am. Math. Soc. {\bf 83} 124-126, (1977).

\bibitem[Tak]{bib:tak}
P. Tak\'a$\check{c}$, {\em Bifurcations and vortex formation in the Ginzburg-Landau equations}, Z. Angew. Math. Mech. {\bf  81}, 523–539 (2001).

\bibitem[Troy]{bib:troy}
M. Troyanov, {\em Prescribing curvature on compact surfaces with  conical singularities}, {\em Transactions of the AMS}  {\bf 324}, 793 -  821, (1991).

\bibitem[TS]{bib:ts}
T. Tzaneteas and I.M. Sigal, {\em  Abrikosov lattice solutions of the Ginzburg–Landau equations}, Contemp. Math. {\bf 535}, 195–213 (2011). 

\bibitem[Wells]{bib:wells}
R.O. Wells, Jr., {\em Differential Analysis on Complex Manifolds, Third Edition}, 
Graduate Texts in Mathematics {\bf 65}, Springer-Verlag,Berlin, Heidelberg, New York,  (2008).

\end{thebibliography}
\end{document}